\title{\LARGE \bf
Test and Evaluation of Quadrupedal Walking Gaits \\
through Sim2Real Gap Quantification
}
\author{Prithvi Akella, Wyatt Ubellacker, and Aaron D. Ames$^{1}$% <-this % stops a space
\thanks{*This work was supported by AFOSR (FA9550-19-1-0302) and Dow (\#227027AT)}% <-this % stops a space
\thanks{$^{1}$All authors are with the California Institute of Technology
        {\texttt{\{pakella,wubellac,ames\}@caltech.edu}}}%
}
\begin{document}

\maketitle
\thispagestyle{empty}
\pagestyle{empty}

%%%%%%%%%%%%%%%%%%%%%%%%%%%%%%%%%%%%%%%%%%%%%%%%%%%%%%%%%%%%%%%%%%%%%%%%%%%%%%%%
\begin{abstract}

In this letter, the authors propose a two-step approach to evaluate and verify a true system's capacity to satisfy its operational objective.  Specifically, whenever the system objective has a quantifiable measure of satisfaction, \textit{i.e.} a signal temporal logic specification, a barrier function, \textit{etc} - the authors develop two separate optimization problems solvable via a Bayesian Optimization procedure detailed within.  This dual approach has the added benefit of quantifying the Sim2Real Gap between a system simulator and its hardware counterpart.  Our contributions are twofold.  First, we show repeatability with respect to our outlined optimization procedure in solving these optimization problems.  Second, we show that the same procedure can discriminate between different environments by identifying the Sim2Real Gap between a simulator and its hardware counterpart operating in different environments. {\color{white}\cite{video}}

\end{abstract}

%%%%%%%%%%%%%%%%%%%%%%%%%%%%%%%%%%%%%%%%%%%%%%%%%%%%%%%%%%%%%%%%%%%%%%%%%%%%%%%%
\section{INTRODUCTION}
It is a well-known problem that simulators are an imperfect representation of their real counterparts.  As a result, both the study of identifying simulator accuracy and of developing a controller in simulation such that it translates well to reality have been of increasing importance in the recent past~\cite{kadian2020sim2real, tobin2017domain, sadeghi2016cad2rl, andrychowicz2020learning, matas2018sim, nachum2019multi}.  This discrepancy is termed the \textit{Sim2Real} gap, and the process of developing controllers and/or policies in simulation such that they transfer well to reality is termed \textit{Sim2Real transfer}.  There has also been a wealth of work aimed at building better simulators to facilitate such transfer, as system evaluation and development within a simulator is significantly less expensive, time-intensive, and dangerous, especially for safety-critical systems~\cite{gupta2017cognitive,dosovitskiy2017carla,xia2018gibson,savva2017minos}.

This sequence of evaluation and development of a system's controller within a simulator underscores the current theoretical push for verifiable Test and Evaluation techniques.  More aptly, these techniques would determine whether these (and perhaps other) controllers adequately produce desired system behavior in reality~\cite{seshia2016towards}.  For context, the desired system behavior is oftentimes expressed as a temporal logic specification~\cite{baier2008principles,corso2020survey}.  The pursuit of such verification techniques has been studied from both a model-based perspective~\cite{annpureddy2011s,tuncali2016utilizing,donze2010breach,dreossi2019verifai} and from a purely data-driven perspective as well~\cite{ghosh2018verifying,gangopadhyay2019identification,deshmukh2017testing}.  Additionally, the authors note that the verification problem is oftentimes phrased as an optimization problem, and that a specific solution technique, Bayesian Optimization, has also been employed to solve the dual problem of verification, control-development~\cite{marco2017virtual,berkenkamp2016safe,berkenkamp2021bayesian, corso2020survey}.

However, as expressed in~\cite{corso2020survey}, direct application of the data-driven techniques proposed in~\cite{ghosh2018verifying,gangopadhyay2019identification,deshmukh2017testing} to verify real systems might require a prohibitively large number of samples to make any evaluation or verification claim.  Additionally, even though there exist techniques to offset this sample cost in high dimensions when using a Bayesian Optimization specific approach, \textit{e.g.} through random embeddings~\cite{wang2013bayesian, rana2017high, rolland2018high}, these techniques would make no use of the high-fidelity simulators in development~\cite{gupta2017cognitive,dosovitskiy2017carla,xia2018gibson,savva2017minos}.  As such, the authors had asked the question in prior work: could one use a system simulator to offset the required number of rollouts of a true system required for verification of its controller~\cite{akella2021learning}? Could we also determine simulator accuracy through this procedure? Motivated by these questions, this letter aims to develop a simulator-based evaluation procedure that also bounds the Sim2Real gap.

\begin{figure}
    \centering
    \includegraphics[width = 0.49\textwidth]{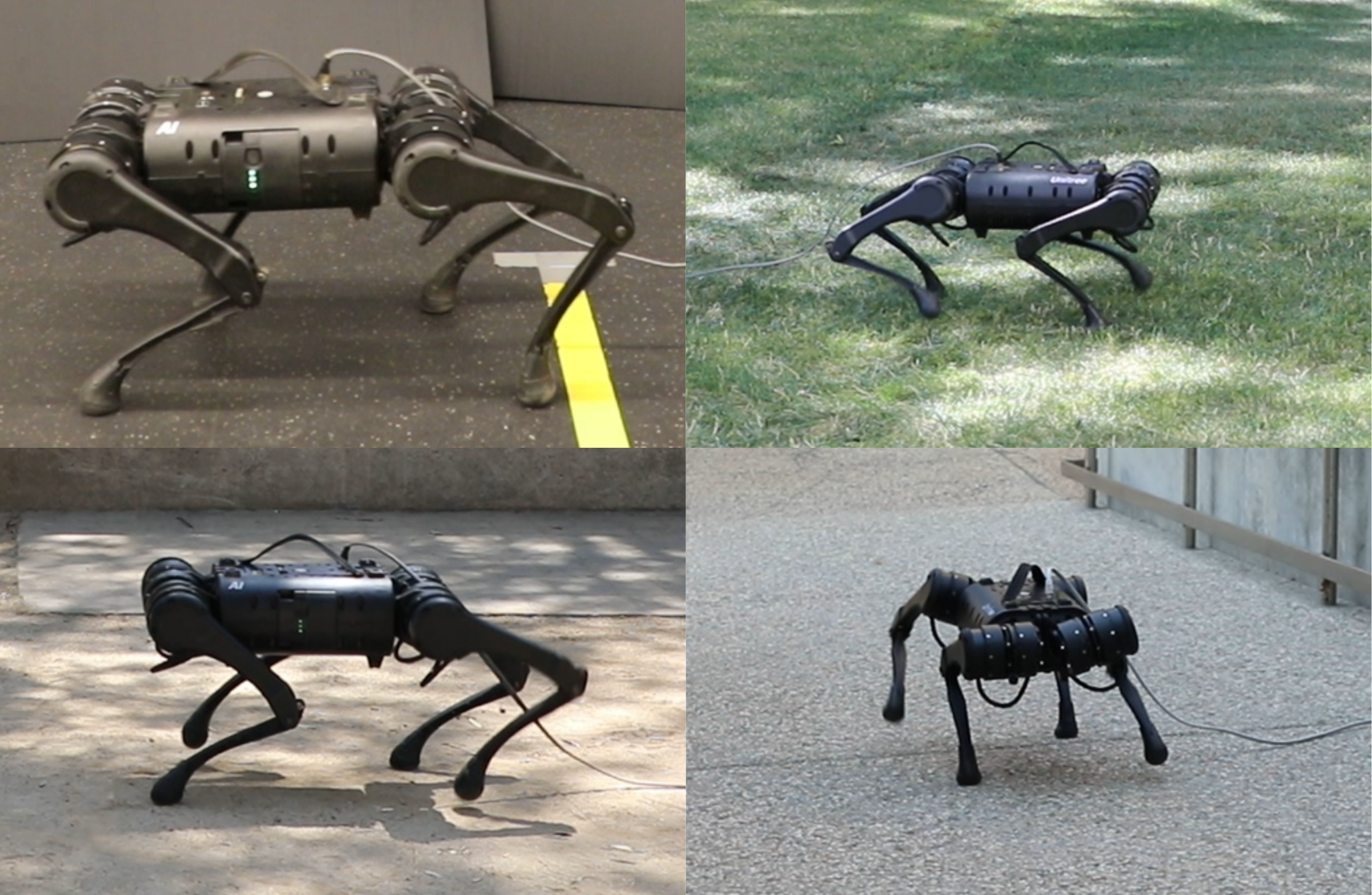}
    \vspace{-0.3 cm}
    \caption{Unitree A1 Quadruped shown walking in the different environments in which its walking gait is evaluated.}
    \label{fig:Quadrupedphoto}
    \vspace{-0.5 cm}
\end{figure}

\spacing
\newidea{Our Contribution:}
Our contribution is threefold.
\begin{itemize}
    \item First, we develop a Bayesian Optimization algorithm based on prior controls works.  We prove that this algorithm produces an upper bound to a maximization problem that is close to the true maximum, and that this bound holds with a minimum probability.
    \item Second, we develop an optimization procedure designed to lower bound the robustness with which the true system satisfies its objective while minimizing two simulator-based optimization problems.  We prove that this procedure identifies a lower bound that is close to the real system's minimum robustness in expectation.
    \item Third, we lower bound the walking robustness of a Quadruped by identifying its minimum simulator robustness and determining its Sim2Real gap in a variety of environments.  We show that our algorithm can repeatably identify these parameters as well.
\end{itemize}

\spacing
\newidea{Organization}
Section~\ref{sec:preliminaries} briefly outlines Bayesian Optimization.  Section~\ref{sec:definitions} states some definitions and assumptions facilitating a formal statement of our problem in Section~\ref{sec:prob_state}.  Section~\ref{sec:optimization} details our proposed optimization procedure and Section~\ref{sec:system_evaluation} details its use in evaluating system performance and bounding the Sim2Real Gap.  Finally, Section~\ref{sec:experimental} shows an example of our procedure evaluating the Unitree A1 Quadruped shown in Figure~\ref{fig:Quadrupedphoto}.
\section{BACKGROUND INFORMATION}
\label{sec:preliminaries}
In this section, we will briefly describe Bayesian Optimization - a necessary solution technique that lays the groundwork for future work mentioned in the paper.  To facilitate its description, we will start with some notation.

\spacing
\newidea{Notation:} $\mathbb{R}_+ = \{x \in \mathbb{R}~|~x \geq 0\}$ and $\mathbb{R}_{++} = \{x \in \mathbb{R}~|~x > 0\}$.  A signal $s:\mathbb{R}_{+} \to \mathbb{R}^n$.  The space of all signals $\signalspace = \{s~|~s:\mathbb{R}_{+} \to \mathbb{R}^n\}$. $C(X)$ is the set of all continuous functions over $X$.  A kernel function $k: \mathbb{Z} \times \mathbb{Z} \to \mathbb{R}_{+}$ is a positive semi-definite, symmetric function.  $\mathcal{R}(k)$ is the Reproducing Kernel Hilbert space (RKHS) of a kernel $k$ and $\|J\|_{RKHS}$ is the RKHS norm of a function $J:\mathbb{Z} \to \mathbb{R}$.

\spacing
\newidea{Bayesian Optimization:} The brief description of Bayesian Optimization (BO) in this subsection stems primarily from~\cite{srinivas2009gaussian, chowdhury2017kernelized}.  Bayesian Optimization attempts to solve optimization problems of the following form:
\begin{equation}
    J^* = \max_{z \in \mathbb{Z}}~J(z),~\mathbb{Z}\subset \mathbb{R}^l,~l<\infty.
\end{equation}
The optimization procedure follows a series of steps.  First, either a Gaussian Process is provided or fit to an initial data-set $\mathbb{D}_n = \{(z_i,y_i)\}_{i=1}^n$ with (potentially) noisy samples $y_i$ of the following form:
\begin{equation}
    \label{eq:sampling_criteria}
    y_i = J(z_i) + \xi_i \sim \mathcal{N}(0,\lambda\nu^2),~\lambda,\nu \in \mathbb{R}_{+}.
\end{equation}
Here, $\lambda,\nu$ are parameters for the Bayesian Optimization procedure - specifically for Gaussian Process Regression.  To this data-set $\mathbb{D}_n$, the procedure then fits a Gaussian Process $\pi$ to $J$ based on choice of a kernel function $k:\mathbb{Z} \times \mathbb{Z} \to \mathbb{R}_{+}$:
\begin{align}
    \label{eq:GP_mean}
    \mu_n(z) & = k_n(z)^T\left(K_n + \lambda I\right)^{-1}y_{1:n}, \\
    k_n(z,z') & = k(z,z') - k_n(z)^T\left(K_t + \lambda I\right)^{-1}k_n(z'), \\
    \label{eq:GP_sigma}
    \sigma_n(z) & = k_n(z,z).
\end{align}
Here, $k_n(z) = [k(z,z_1), \dots k(z,z_n)]^T$ is the covariance of $z$ with respect to the sampled data $z_i\in \mathbb{D}_n$, $y_{1:n} = [y_1, y_2, \dots, y_n]^T$ are the noisy samples, and $(K_n)_{i,j} = k(z_i,z_j),~z_i,z_j\in \mathbb{D}_n$ is the positive-definite Kernel Matrix.  Third, the next sample point $z_{i+1}$ is defined as the maximizer of an \textit{acquisition function} over the fitted Gaussian Process $\pi$ to the function $J$:
\begin{equation}
    \label{eq:UCB}
    z_{i+1} = \argmax_{z \in \mathbb{Z}}~\mu_i(z) + \beta_{i+1} \sigma_i(x).
\end{equation}
The Upper Confidence Bound (UCB) acquisition function is shown above and is one example of an acquisition function~\cite{srinivas2009gaussian,bull2011convergence}.  Finally, the procedure samples $z_{i+1}$, generates a new measurement $y_{i+1}$, adds it to the data-set, fits another Gaussian Process, and repeats the procedure.

Bayesian Optimization procedures guarantee eventual convergence by proving sub-linear growth in the sum-total regret $R_j = \sum_{i=1}^j r_i$ where $r_i = J^*-J(z_i)$.   As we assume we have noisy samples $y_i$ of our function $J$, these regret growth bounds are written with respect to the \textit{maximum information gain} at iteration $i$:
\begin{gather}
    \label{eq:max_info_gain}
    \gamma_i = \max_{A \subset \mathbb{Z} \suchthat |A| = i}~I(y_A;J_A).
\end{gather}
Here, $I(y_A;J_A)$ is the \textit{mutual information gain} between $J_A = [J(z)]_{z\in A}$ and $y_A = J_A + \xi_A \sim \normal(0,\lambda v^2I)$. $I(y_A;J_A)$ quantifies the reduction in uncertainty about the objective $J$ after sampling points $z \in A$.  With this brief description of Bayesian Optimization, we will move to formally stating the problem under study in this paper.

\section{PROBLEM FORMULATION}
We will split this section into two parts.  First, we will provide some definitions and assumptions that will be used throughout the paper.  Then we will state our problem.

\subsection{Definitions and Assumptions}
\label{sec:definitions}
As mentioned, the goal of Test and Evaluation is to determine whether a system's controller can realize desired system behavior despite a set of (perhaps) adversarial phenomena in the working environment~\cite{seshia2016towards}.  To formalize this notion, we will first define the environment state $x_E$.
\begin{definition}
\label{def:environment}
The \textit{environment} $E$ is the state of the world in which the system operates including the state of the system itself, \textit{e.g.} the cave in which a robot is traversing coupled with any motor failures the robot may have suffered, the airspace in which a jet flies along with any engine failures, \textit{etc}.  The state of the environment will be represented through the \textit{environment state vector} $x_E$.
\end{definition}
As defined, the environment state $x_E$ may be incomprehensibly large, indeed even infinite.  Additionally, the goal of Test and Evaluation is to verify whether the system under test can operate satisfactorily despite a set of allowable perturbations in its environment.  As such, we will assume we can partition the environment state into a set of knowable, \textit{i.e.} testable, and unknowable phenomena.
\begin{definition} 
\label{def:valid_tests}
The state $x_E$ of the environment $E$ can be segmented into a set of known disturbances $d \in \mathcal{D}$ and unknown disturbances $w \in \mathbb{W}$, \textit{i.e.} $x_E^T = [d^T, w^T]^T$.  The space of known disturbances $\mathcal{D}$ is the \textit{feasible test space} and each $d \in \mathcal{D}$ is a \textit{test parameter vector}.
\end{definition}

Now that we have formally defined our environment, it remains to classify the types of systems under study.  For the sequel, we will consider a general, (perhaps) nonlinear control system coupled with a controller.
\begin{equation}
\label{eq:true_system}
\dot x = f(x,u,d,w),~~
\begin{array}{cc}
    x \in \mathcal{X}, & u = U(x,d) \in \mathcal{U},  \\
    d \in \mathcal{D}, & w \in \mathbb{W}~\mathrm{and}~w \sim \pi(d).
\end{array}
\end{equation}
For the true system~\eqref{eq:true_system}, $\mathcal{U}:\mathcal{X} \times \mathcal{D} \to \mathcal{U}$ and is our true-system controller, $d \in \mathcal{D}$ is a specific test as in Definition~\ref{def:valid_tests}, and $w \in \mathbb{W}$ is our unknown disturbance as per Definition~\ref{def:valid_tests}.  We also assume $w$ to be a random variable distributed via the unknown, (perhaps) $d$-dependent distribution $\pi(d)$.  Furthermore, we note that $\mathcal{X} \subset \mathbb{R}^n$ and $\mathcal{U} \subset \mathbb{R}^m$.  We will likewise assume we have a simulator for this true system.
\begin{equation}
\label{eq:nominal_system}
\dot{\hat{x}} = \hat{f}\left(\hat{x},\hat{u},d,\hat{w}\right),
\begin{array}{cc}
    \hat{x} \in \mathcal{X}, & \hat{u} = \hat{U}\left(\hat{x},d\right) \in \mathcal{U},  \\
    d \in \mathcal{D}, & \hat{w} \in \mathbb{\hat{W}}~\mathrm{and}~\hat{w} \sim \hat{\pi}(d).
\end{array}
\end{equation}
As before, $\hat{U}:\mathcal{X} \times \mathcal{D} \to \mathcal{U}$ and is our simulator controller, $d \in \mathcal{D}$ is a test as per Definition~\ref{def:valid_tests}, and $\hat{w} \in \mathbb{\hat{W}}$ is our unknown simulator disturbances as per Definition~\ref{def:valid_tests}.  We likewise assume $\hat w$ is distributed via the unknown, (perhaps) $d$-dependent distribution~$\hat{\pi}(d)$.  Here we note that we have also implicitly defined a simulator environment that also satisfies Definition~\ref{def:environment} as we have unknown simulator disturbances that are different from those that exist in the real world.  For context, this setup models most systems with a Gazebo simulator, as Gazebo is non-deterministic.

In order to determine whether either system satisfies its specification, we require the system's signal trace, \textit{i.e.} its state trajectory $\phi$.  In defining $\phi$ we will abbreviate $U(x(t),d) = U(t)$ and $\hat U\left( \hat x(t), d\right) = \hat U (t)$.
\begin{equation}
\label{eq:trajectories}
\begin{aligned}
    \phi^U_t(x_0) & = x_0 + \int_{0}^t f(\phi^U_s(x_0), U(s), d, w(s))~ds, \\
    \hat{\phi}^{\hat{U}}_t\left(\hat{x}_0\right) & = \hat{x}_0 + \int_{0}^t \hat{f}\left(\hat{\phi}^{\hat{U}}_s\left(\hat{x}_0\right), \hat{U}(s), d, \hat{w}(s)\right)~ds.
\end{aligned}
\end{equation}
Here, $w(t) = w \in \mathbb{W}$ such that $w \sim \pi(d)$ as per equation~\eqref{eq:true_system}, and the same holds for $\hat{w}(t)$ with respect to equation~\eqref{eq:nominal_system} as well.  Furthermore, we note that while $\phi^U_t(x_0) \in \mathcal{X}$ and represents the system state after some elapsed time, when we drop the time suffix, $\phi^U(x_0) \in \signalspace$ and represents the state trajectory signal as a whole.  Owing to the noise sequences $w(t),\hat{w}(t)$ then, the resulting closed loop signals $\phi^U(x_0),\hat{\phi}^{\hat{U}}\left(\hat{x}_0\right)$ are random variables.  To formalize this notion we will state that the trajectories are distributed via unknown, (perhaps) $d$-dependent distributions $\Pi(d),\hat{\Pi}(d)$.
\begin{equation}
\label{eq:probability_spaces}
\begin{gathered}
        \phi^U(x_0),\hat{\phi}^{\hat{U}}\left(\hat{x}_0\right) \in \signalspace, \\
        \phi^U(x_0) \sim \Pi(d)~\mathrm{and}~\hat{\phi}^{\hat{U}}\left(\hat{x}_0\right) \sim \hat{\Pi}(d).
\end{gathered}
\end{equation}

Then, we will end with one assumption on our capacity to measure the satisfaction of a system's objective.  To formalize this notion, we will define a robustness measure.
\begin{definition}
\label{def:robustness}
A system specification and/or objective $\psi: \signalspace \to \{\true, \false\}$ has an associated \textit{robustness measure}~$\rho: \signalspace \to \mathbb{R}_+$ such that a signal $s$ satisfies $\psi$, \textit{i.e.} $\psi(s) = \true$ if and only if $\rho(s) \geq 0$, \textit{i.e.}
\begin{equation}
    \psi(s) = \true \iff \rho(s) \geq 0.
\end{equation}
\end{definition}
\noindent Here, the authors note that such a robustness measure could be that which exists for every Signal Temporal Logic specification~\cite{donze2010robust,baier2008principles}, it could be the minimum value of a control barrier function over some bounded time interval~\cite{ames2016control}, or it could be some other function mapping signals to a quantifiable satisfaction metric, \textit{i.e.} distance to a leading vehicle~\cite{corso2020survey,wheeler2019critical}.  The specific type of robustness measure does not matter, and to help clarify this setting, we will provide an example.

\begin{example}
\label{example1}
Consider a simple autonomous agent, say a turtlebot, navigating within a predefined space $\mathcal{X} = [-1,1]^2$.  Further assume the agent's goal is to avoid an obstacle $O = \{x \in [-1,1]^2~|~\|x - o\|_2 < \delta_o\}$ while navigating to a specific goal region $G = \{x \in [-1,1]^2~|~\|x-g\|_2 \leq \delta_g\}$ within $T$ seconds.  Also assume the obstacle's center location $o$ can vary over $\mathcal{D} = [-1,1]^2$.  Then the test parameter vector $d \in \mathcal{D}$ as per Definition~\ref{def:valid_tests} is the obstacle's center location.  The robustness measure $\rho(s) = \min_{t \in [0,T]} \{\delta_g - \|s(t)-g\|, \|s(t) - o\| - \delta_0\}$ which satisfies Definition~\ref{def:robustness}.
\end{example}

\noindent Our formal problem statement will follow.

\subsection{Problem Statement}
\label{sec:prob_state}
As motivated prior, our goal is to determine whether the true system satisfies its specification and/or objective $\psi$ as defined in Definition~\ref{def:robustness}.  As we expect our true system to be noisy however, we cannot directly optimize over closed loop trajectories, \textit{i.e.} $\rho\left(\phi^U(x_0)\right)$.  As a result, we will choose to optimize for the minimum expected value instead, \textit{i.e.},
\begin{equation}
    \label{eq:minimum_true_robustness}
    \rho^* = \min_{d \in \mathcal{D}}~\expect_{\Pi(d)}\left[\rho\left(\phi^U(x_0)\right) \right].
\end{equation}
\noindent This leads to our formal problem statement.
\begin{problem} 
For a robustness measure $\rho$ satisfying Definition~\ref{def:robustness} determine an appropriate lower bound $\rho^e$ to $\rho^*$ as defined in equation~\eqref{eq:minimum_true_robustness} and error constant $\epsilon$ such that:
\begin{equation}
    \rho^* \geq \rho^e,\quad |\rho^* - \rho^e| \leq \epsilon, \quad 0 \leq \epsilon < \infty.
\end{equation}
\end{problem}

\spacing
\newidea{Overarching Approach:}  In what will follow, we will provide a brief overview of our procedure.  First, the authors note that one could directly apply the Bayesian Optimization approach we will detail to solve for a lower bound for $\rho^*$ as in equation~\eqref{eq:minimum_true_robustness}.  This procedure requires some assumptions on optimization problem~\eqref{eq:minimum_true_robustness} though we will neglect to mention those at the moment.  However, direct application of this Bayesian technique might result in a prohibitively large number of required true-system runs to realize an effective lower bound~\cite{corso2020survey}.  As a result, we will opt instead to solve two simulator-specific optimization problems as follows:
\begin{equation}
    \label{eq:simulator_opt}
    \begin{gathered}
    \hat{\rho}^* = \min_{d \in \mathcal{D}}~\expect_{\hat{\Pi}(d)}\left[\rho\left(\hat{\phi}^{\hat{U}}\left(\hat{x}_0\right)\right) \right], \\
    e^* = \max_{d \in \mathcal{D}}~\expect_{\Pi(d),\hat{\Pi}(d)}\left[\left|\rho\left(\phi^U(x_0)\right) - \rho\left(\hat{\phi}^{\hat{U}}\left(\hat{x}_0\right)\right) \right| \right].
    \end{gathered}
\end{equation}
Making some assumptions on the optimization problems in equation~\eqref{eq:simulator_opt}, we will show we can identify $\hat{\rho}^e,e^e$ that satisfy the following sets of inequalities:
\begin{equation}
    \label{eq:bounds}
    \begin{gathered}
    \hat{\rho}^* \geq \hat{\rho}^e~\&~|\hat{\rho}^* - \hat{\rho}^e| \leq \epsilon_0 \withprob \geq 1-\delta_0, \\
    e^* \leq e^e~\&~|e^*-e^e| \leq \epsilon_1 \withprob \geq 1-\delta_1,
    \end{gathered}
\end{equation}
with $\delta_0,\delta_1 \in (0,1]$ and $\epsilon_0,\epsilon_1 \in \mathbb{R}_+$.  Then our final result stems by defining $\rho^e = \hat \rho^e - e^e$, $e = 2 e^e + \epsilon_0 + \epsilon_1$, and noting that this satisfies the required conditions in our problem statement, \textit{i.e.}
\begin{equation}
    \rho^* \geq \rho^e,~\mathrm{and}~|\rho^* - \rho^e| \leq \epsilon \withprob \geq (1-\delta_0)(1-\delta_1).
\end{equation}

\section{MAIN CONTRIBUTIONS}
In this section, we will detail our proposed Bayesian Optimization Algorithm and state and prove a theorem regarding its use.  Then, we will state and prove another theorem regarding its application to lower bounding $\rho^*$ in equation~\eqref{eq:minimum_true_robustness} while identifying the maximum Sim2Real gap $e^*$ in equation~\eqref{eq:simulator_opt}.  We will split this section into two sections.  The first will detail the optimization algorithm and the latter will detail its use in lower bounding true-system robustness.
\subsection{OPTIMIZATION ALGORITHM}
\label{sec:optimization}
In this section, we will detail our proposed GP-UCB Bayesian Optimization algorithm building off the work done in~\cite{srinivas2009gaussian,chowdhury2017kernelized} and algorithms utilized in prior controls works~\cite{ghosh2018verifying,berkenkamp2016safe,berkenkamp2021bayesian}.  More aptly, this algorithm will identify upper bounds $J^e$ to the following optimization problem:
\begin{equation}
    \label{eq:basic_optimization}
    J^* = \max_{z \in \mathbb{Z}}~J(z).
\end{equation}
We construct such an algorithm, for as motivated in Section~\ref{sec:prob_state}, we will require accurate estimates $\hat \rho^e,e^e$ as in equation~\eqref{eq:bounds} for our procedure.  Before stating the algorithm however, we will briefly describe it.  To start, we require positive constants $\delta \in (0,1]$, $B,R, \epsilon \in \mathbb{R}_{++}$ and an initial dataset $\mathbb{D}_0=\{(z,y)\}$ of one (perhaps noisy) sample of the objective $J$ as per equation~\eqref{eq:sampling_criteria}.  Then, Algorithm~\ref{alg:algorithm} first defines in Line 2 a scale factor
\begin{equation}
    \beta_i = B + R \sqrt{2 \ln{\frac{\sqrt{\det\left((1+\frac{2}{i})I + K_i\right)}}{\delta}}}, \label{eq:beta}
\end{equation}
and, in Line 3, identifies the maximizer of the UCB acquisition function $z_i$ with respect to this $\beta_i$ and the fitted Gaussian Process $\pi$ to $J$ at iteration $i$.  In Line 4, the algorithm collects a noisy measurement $y_i$ of $J(z_i)$, and the sample pair $(z_i,y_i)$ is added to the data-set generating $\mathbb{D}_i$.  Line 5 defines the simple regret bound
\begin{equation}
    \label{eq:F}
    F_i = 2 \beta_i \sigma_{i-1}(z_i).
\end{equation}
Here, $\sigma_{i-1}$ is the variance of the fitted Gaussian Process to the data-set $\mathbb{D}_{i-1}$. Lines 6-9 check whether $F_i \leq \epsilon$, the desired tolerance, and if so, the algorithm outputs $\epsilon = \mu_{i-1}(z_i) + \beta_i \sigma_{i-1}(z_i)$ and terminates.  Otherwise, in Line 10, the algorithm updates the fitted Gaussian Process $\pi$ with respect to $\mathbb{D}_i$.  Finally, before moving to this section's main results, we will state an assumption underlying use of this Algorithm.  Indeed, this is a common assumption whenever utilizing Bayesian Optimization~\cite{ghosh2018verifying,gangopadhyay2019identification,berkenkamp2016safe,berkenkamp2021bayesian,marco2017virtual}.
\begin{assumption}
\label{assump:bayes}
For the optimization problem~\eqref{eq:basic_optimization}, the decision space $\mathbb{Z}$ is compact and convex.  Additionally, for some $B,R\in\mathbb{R}_{++}$ and kernel $k$, the objective function $J$ has $\|J\|_{RKHS} \leq B$, and the samples $y_i$ of $J(z_i)$, as per equation~\eqref{eq:sampling_criteria}, are corrupted by $R$-sub Gaussian Noise $\forall~i$.
\end{assumption}

With this assumption we can state the first key result of this paper.  Specifically, that Algorithm~\ref{alg:algorithm} will identify a $J^e$ such that $J^e \geq J^*$ and $|J^e - J^*| \leq \epsilon$ with probability $\geq 1-\delta$.
\begin{theorem}
\label{thm:algorithm}
Let Assumption~\ref{assump:bayes} hold, let $\delta \in (0,1]$, and let $\epsilon \in \mathbb{R}_{++}$.  At termination $i^*$, Algorithm~\ref{alg:algorithm} outputs $J^e$ such that $\prob_{\pi}[J^* \leq J^e] \geq 1-\delta$ and $\prob_{\pi}[|J^*-J^e| \leq \epsilon] \geq 1-\delta$, with $J^*$ as in equation~\eqref{eq:basic_optimization}, and $\pi$ as in Line 10.
\end{theorem}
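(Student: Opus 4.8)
The plan is to collapse both claims onto a single uniform confidence event for the fitted Gaussian Process, which follows from the kernelized-bandit analysis of~\cite{chowdhury2017kernelized}. First I would invoke their self-normalized concentration result: under Assumption~\ref{assump:bayes}, namely $\|J\|_{RKHS}\leq B$ and $R$-sub-Gaussian noise, the determinant-weighted choice of $\beta_i$ in equation~\eqref{eq:beta} guarantees that, with probability $\geq 1-\delta$, the bound
\begin{equation}
    |J(z) - \mu_{i-1}(z)| \leq \beta_i \sigma_{i-1}(z)
\end{equation}
holds \emph{simultaneously} for every $z \in \mathbb{Z}$ and every iteration $i \geq 1$. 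Call this event $\mathcal{E}$; everything that follows is deterministic once we condition on $\mathcal{E}$.

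Second, I would establish the upper-bound claim $J^* \leq J^e$. Let $z^* = \argmax_{z \in \mathbb{Z}} J(z)$. On $\mathcal{E}$ the upper side of the confidence bound gives $J^* = J(z^*) \leq \mu_{i-1}(z^*) + \beta_i \sigma_{i-1}(z^*)$. Since Line 3 selects $z_i$ as the maximizer of the UCB acquisition function, the right-hand side is at most $\mu_{i-1}(z_i)+\beta_i\sigma_{i-1}(z_i) = J^e$. Chaining these yields $J^* \leq J^e$, valid on $\mathcal{E}$ and hence with probability $\geq 1-\delta$.

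Third, I would handle the closeness claim $|J^*-J^e|\leq\epsilon$. The previous step already gives $J^e - J^* \geq 0$ on $\mathcal{E}$, so it suffices to upper bound $J^e - J^*$. Because $z^*$ is the global maximizer, $J^* \geq J(z_i)$, whence $J^e - J^* \leq J^e - J(z_i)$. Applying the lower side of the confidence bound at $z_i$, namely $J(z_i) \geq \mu_{i-1}(z_i) - \beta_i \sigma_{i-1}(z_i)$, collapses this to
\begin{equation}
    J^e - J^* \leq 2\beta_i \sigma_{i-1}(z_i) = F_i,
\end{equation}
which is exactly the quantity tested in Lines 6--9. Since termination at $i^*$ fires precisely when $F_{i^*} \leq \epsilon$, I conclude $|J^* - J^e| \leq \epsilon$ on $\mathcal{E}$. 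Both claims inhabit the same event, so each holds with probability $\geq 1-\delta$.

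Finally, I expect the main obstacle to be the \emph{random} stopping time $i^*$: because $i^*$ is a data-dependent index, one cannot legitimately apply a per-iteration confidence bound at a fixed round. The feature of the~\cite{chowdhury2017kernelized} bound that resolves this is that it holds uniformly over all $i\geq 1$ at once under the single $\beta_i$ of equation~\eqref{eq:beta}; evaluating the already-uniform bound at $i=i^*$ then costs no additional probability. I would also verify that the algorithm's $\beta_i$ matches the constant for which that uniform bound is proved, so that the upper-bound and closeness claims rest on the \emph{same} confidence event rather than two separately budgeted ones.
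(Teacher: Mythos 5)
Your argument for the two inequalities at termination is correct and is essentially the paper's own: the same uniform confidence event from~\cite{chowdhury2017kernelized} (Proposition~\ref{prop:bounding}), the same UCB chaining to get $J^* \leq J^e$, and the same observation that $J^e - J(z_{i^*}) \leq 2\beta_{i^*}\sigma_{i^*-1}(z_{i^*}) = F_{i^*} \leq \epsilon$ at the stopping test. Your closeness step is in fact slightly cleaner than the paper's, which redundantly re-derives the direction $J^* - J^e \leq \epsilon$ through the simple regret $r_{i^*}$ even though $J^* \leq J^e$ already makes it trivial; and your explicit treatment of the random stopping index $i^*$ --- noting that Proposition~\ref{prop:bounding} holds uniformly over all $i$ on a single event, so evaluating at a data-dependent $i^*$ costs no extra probability --- is a point the paper handles only implicitly.

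However, there is a genuine gap: you never prove that the algorithm terminates, and the paper devotes the entire first half of its proof to exactly this. Algorithm~\ref{alg:algorithm} only outputs $J^e$ when $F_i \leq \epsilon$ first occurs; if no finite $i^*$ with $F_{i^*} \leq \epsilon$ exists, the while-loop runs forever and the theorem's conclusion is vacuous, or at best holds conditionally on an event you have not shown occurs. The paper closes this by contradiction: if $F_i > \epsilon$ for all $i$, then $\epsilon < \frac{1}{j}\sum_{i=1}^j F_i$ for every $j$, while Lemma~\ref{lem:Fbound} (built on the second inequality of Proposition~\ref{prop:bounding} together with $\sum_{i=1}^j \sigma_{i-1}(z_i) \leq O\left(\sqrt{j\gamma_j}\right)$) bounds $\sum_{i=1}^j F_i \leq O\left(\sqrt{j}\left(B\sqrt{\gamma_j}+R\sqrt{\gamma_j\left(\gamma_j+\ln\frac{1}{\delta}\right)}\right)\right)$, and Proposition~\ref{prop:bound_information} supplies a kernel with $\gamma_j \leq O(j^p\log j)$, $p<0.5$, forcing the running average of $F_i$ to zero --- contradicting $\epsilon \in \mathbb{R}_{++}$. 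Your proposal uses none of this machinery (Lemma~\ref{lem:Fbound}, Proposition~\ref{prop:bound_information}, the sublinear growth of the maximum information gain), so as written it proves only the weaker statement ``if the algorithm terminates, its output satisfies the bounds''; to match the paper you must add the termination argument.
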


Proving Theorem~\ref{thm:algorithm} requires two Propositions.  The first bounds the variance of the objective function $J$ with respect to the fitted Gaussian Process $\pi$ (Line 10) and the scale factor $\beta_i$ defined in equation~\eqref{eq:beta}.
\begin{proposition}[Theorem~2 in~\cite{chowdhury2017kernelized}]
\label{prop:bounding}
Let $\beta_i$ be as in~\eqref{eq:beta}, $\gamma_j$ as in equation~\eqref{eq:max_info_gain}, $\delta \in (0,1]$, and let Assumption~\ref{assump:bayes} hold.  With probability $\geq 1-\delta$, $|\mu_{i-1}(z) - J(z)| \leq \beta_i\sigma_{i-1}(z)~\forall~i$ and $\forall~z \in \mathbb{Z}$, and
\begin{equation}
    \beta_i \leq B + R \sqrt{2 \left(\gamma_j + 1 + \ln{\frac{1}{\delta} }\right)},~\forall~ i = 1,2,\dots,j.
\end{equation}
\end{proposition}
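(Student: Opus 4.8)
The plan is to recognize Proposition~\ref{prop:bounding} as a restatement of Theorem~2 of~\cite{chowdhury2017kernelized}, so the argument splits cleanly into two pieces: (i) the uniform confidence band $|\mu_{i-1}(z) - J(z)| \leq \beta_i \sigma_{i-1}(z)$ holding for all $i$ and all $z \in \mathbb{Z}$ with probability $\geq 1-\delta$, which I would obtain by importing their RKHS self-normalized concentration machinery, and (ii) the closed-form upper bound on $\beta_i$ in terms of the maximum information gain $\gamma_j$, which I would establish directly through elementary log-determinant manipulations. Before invoking (i) I would verify that Assumption~\ref{assump:bayes} --- compact, convex $\mathbb{Z}$, the RKHS-norm bound $\|J\|_{RKHS} \leq B$, and $R$-sub-Gaussian sampling noise as in~\eqref{eq:sampling_criteria} --- matches the hypotheses under which their theorem is proved, and that our kernel matrix $K_i$, posterior mean $\mu_{i-1}$, and posterior variance $\sigma_{i-1}$ agree with their notation.

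For part (i), the key step is to decompose the prediction error $\mu_{i-1}(z) - J(z)$ into a regularization (bias) term and a noise-driven term. Using the reproducing property and Cauchy--Schwarz in the RKHS, the bias term is controlled by $\|J\|_{RKHS} \leq B$ scaled by $\sigma_{i-1}(z)$, contributing the leading $B$ in $\beta_i$. The noise term requires a self-normalized martingale concentration inequality lifted from the finite-dimensional linear-bandit setting of Abbasi-Yadkori et al.\ to the (possibly infinite-dimensional) feature space of the kernel; this is the technical heart, and it is where the determinant $\det((1+\tfrac{2}{i})I + K_i)$ appearing in~\eqref{eq:beta} is generated, through a method-of-mixtures super-martingale together with a stopping-time argument that yields uniformity over all $i$.

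For part (ii), I would bound $\beta_i$ by connecting the log-determinant to $\gamma_j$. Factoring $\det((1+\tfrac{2}{i})I + K_i) = (1+\tfrac{2}{i})^i \det\!\big(I + \tfrac{1}{1+2/i}K_i\big)$ and using $i\ln(1+\tfrac{2}{i}) \leq 2$ yields $\tfrac12\ln(1+\tfrac{2}{i})^i \leq 1$, which produces the additive constant. The remaining factor satisfies $\tfrac12\ln\det(I + \tfrac{1}{1+2/i}K_i) \leq \tfrac12\ln\det(I + K_i)$, and recognizing the right side as the mutual information of the $i$ selected points shows it is at most $\gamma_i$, which by monotonicity of the maximum information gain in~\eqref{eq:max_info_gain} is at most $\gamma_j$ for every $i \leq j$. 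Substituting these into $\ln\frac{\sqrt{\det(\cdot)}}{\delta} = \tfrac12\ln\det(\cdot) + \ln\tfrac{1}{\delta}$ delivers the stated inequality $\beta_i \leq B + R\sqrt{2(\gamma_j + 1 + \ln\tfrac{1}{\delta})}$.

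I expect the main obstacle to be part (i): the RKHS self-normalized concentration bound demands an infinite-dimensional change-of-measure (a Gaussian mixture over the feature space) and careful handling of the conditional sub-Gaussianity so that the resulting super-martingale stays uniform in $i$. Since Proposition~\ref{prop:bounding} quotes this result verbatim from~\cite{chowdhury2017kernelized}, the cleanest route is to cite their self-normalized bound for the concentration step and restrict my own effort to the determinant-to-$\gamma_j$ reduction in part (ii), which is fully self-contained.
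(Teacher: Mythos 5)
Your proposal is correct and takes essentially the same route as the paper: the paper states Proposition~\ref{prop:bounding} without proof, importing both inequalities verbatim from the proof of Theorem~2 in~\cite{chowdhury2017kernelized}, which is precisely the source you defer to for the self-normalized RKHS concentration step. Your part (ii) reduction --- factoring $\det\left(\left(1+\tfrac{2}{i}\right)I + K_i\right)$, using $i\ln\left(1+\tfrac{2}{i}\right) \leq 2$ for the additive constant, and bounding the remaining log-determinant by the mutual information and hence by $\gamma_i \leq \gamma_j$ --- is exactly how the explicit bound on $\beta_i$ is derived in that reference, so nothing is missing.
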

\noindent In Proposition~\ref{prop:bounding}, $\mu_{i-1}, \sigma_{i-1}$ are the fitted mean and variance functions for the Gaussian Process $\pi$ estimating the objective function $J$ based on the dataset $\mathbb{D}_{i-1}$.  For context, both inequalities in Proposition~\ref{prop:bounding} were taken from the proof for Theorem 2 in~\cite{chowdhury2017kernelized}.  The second proposition bounds the growth rate of $\gamma_j$ as defined in equation~\eqref{eq:max_info_gain}.
 
\begin{algorithm}[t]
\caption{Modified GP-UCB Bayesian Optimization}\label{alg:algorithm}
\begin{algorithmic}[1]
\Require $\delta \in (0,1]$, $B,R \in \mathbb{R}_{++}$, an initial data-set $\mathbb{D}_0 = \{(z,y)~|~z \in \mathbb{Z},~y$ as per~\eqref{eq:sampling_criteria}$\}$, and tolerance $\epsilon \in \mathbb{R}_{++}$.

\hspace{-0.56 in} \noindent \textbf{Returns:} A Gaussian Process $\pi$, and an upper bound $J^e$ such that $\prob_{\pi}[J^* \leq J^e] \geq (1-\delta)$.
\Initialize $i=1$, $\eta_i = \frac{2}{i}$, Gaussian Process with mean $\mu_0$ and covariance $\sigma_0$ from the data-set, $\mathbb{D}_0$ as per equations~\eqref{eq:GP_mean} and \eqref{eq:GP_sigma}.
\While{True}
\State $\beta_i \gets B + R \sqrt{2 \ln{\frac{\sqrt{\det\left((1+\eta_i)I + K_i\right)}}{\delta}}}$
\State $z_i \gets \argmax_{z \in \mathbb{Z}}~\mu_{i-1}(z) + \beta_i\sigma_{i-1}(z)$
\State $\mathbb{D}_{i} \gets \mathbb{D}_{i-1} \cup (z_i, y_i$ as per equation~\eqref{eq:sampling_criteria})
\State $F_i \gets 2 \beta_i \sigma_{i-1}(z_i)$
\If{$F_i \leq \epsilon$}
    \State $J^e = \mu_{i-1}(z_i) + \beta_i \sigma_{i-1}(z_i)$
    \State \textbf{return} $\epsilon$ 
\EndIf
\State Update the Gaussian Process $\pi$ with mean $\mu_i$ and variance $\sigma_i$ as per~\eqref{eq:GP_mean} and \eqref{eq:GP_sigma} with respect to $\mathbb{D}_i$
\State $i \gets i+1$
\EndWhile
\end{algorithmic}
\end{algorithm}

\begin{proposition}[Theorem~5 in~\cite{srinivas2009gaussian}]
\label{prop:bound_information}
Let Assumption~\ref{assump:bayes} hold. There exists a kernel $k$ such that the growth in the maximum information gain $\gamma_j$ satisfies the following inequality:
\begin{equation}
    \gamma_j \leq O(j^p \log(j)),~p < 0.5.
\end{equation}
\end{proposition}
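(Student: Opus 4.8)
The plan is to reduce the combinatorial maximization defining $\gamma_j$ in equation~\eqref{eq:max_info_gain} to a spectral quantity determined by the chosen kernel $k$, and then to exhibit a single kernel for which that spectral quantity grows only poly-logarithmically. First I would use the fact that, under the Gaussian-noise sampling model~\eqref{eq:sampling_criteria}, the pair $(J_A, y_A)$ is jointly Gaussian, so the mutual information admits the closed form
\begin{equation}
    I(y_A; J_A) = \tfrac{1}{2}\log\det\bigl(I + \lambda^{-1} K_A\bigr),
\end{equation}
where $K_A$ is the kernel matrix of $k$ restricted to $A$. This converts $\gamma_j$ into a maximization of a log-determinant over sets $A$ of cardinality $j$.

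The next step exploits two structural properties of this objective. The set function $A \mapsto I(y_A; J_A)$ is monotone and submodular, so the greedy rule that adds the point of maximal conditional variance at each step attains a constant-factor ($1 - 1/e$) approximation of the true maximizer; this legitimizes analyzing the greedy information gain in place of the intractable global maximum. I would then bound the greedy gain through the spectrum of the kernel operator: by Mercer's theorem the operator induced by $k$ on the compact set $\mathbb{Z}$ (compactness is guaranteed by Assumption~\ref{assump:bayes}) has a discrete eigenvalue sequence $\{\hat\lambda_s\}$, and splitting the sum at a truncation level $T_*$ yields a bound of the schematic form
\begin{equation}
    \gamma_j \;\leq\; C\Bigl( T_* \log j \;+\; \lambda^{-1} j \sum_{s > T_*} \hat\lambda_s \Bigr),
\end{equation}
valid for every truncation $T_*$. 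Optimizing over $T_*$ transfers the rate entirely onto the decay rate of the tail eigenvalues $\sum_{s>T_*}\hat\lambda_s$.

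To discharge the existential claim it then suffices to pick a kernel whose eigenvalues decay fast enough. I would take $k$ to be the squared-exponential (Gaussian) kernel, whose eigenvalues decay geometrically; the tail sum then vanishes exponentially in $T_*$, and the optimal truncation $T_* = O((\log j)^{l})$ produces $\gamma_j = O((\log j)^{l+1})$, where $l$ is the dimension of $\mathbb{Z}$. Since this bound is poly-logarithmic it is dominated by $j^p \log j$ for every $p > 0$, and in particular it satisfies $\gamma_j \leq O(j^p \log j)$ with $p < 0.5$, establishing the claim. A Mat\'ern kernel with smoothness $\nu$ large relative to $l$ would serve equally well, giving a genuinely polynomial $p$ that is nonetheless below $1/2$.

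The main obstacle is the spectral step: making the truncation bound rigorous requires controlling the interaction between the operator eigenvalues $\hat\lambda_s$ and the empirical kernel matrix $K_A$ uniformly over all size-$j$ subsets $A$, together with the $(1-1/e)$ submodular reduction that justifies replacing the global maximum by the greedy trajectory. The remaining pieces — the Gaussian closed form and the final comparison $(\log j)^{l+1} = O(j^p \log j)$ — are routine.
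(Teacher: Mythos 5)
Your proposal is correct and takes essentially the same route as the paper's proof, which consists entirely of citing Theorem~5 of~\cite{srinivas2009gaussian}: the argument of that cited theorem is precisely your chain --- the Gaussian closed form $I(y_A;J_A)=\tfrac{1}{2}\log\det\bigl(I+\lambda^{-1}K_A\bigr)$, the submodularity/greedy $(1-1/e)$ reduction, the spectral truncation bound, and the squared-exponential kernel's eigendecay yielding $\gamma_j = O\bigl((\log j)^{l+1}\bigr)$, which is $O(j^p\log j)$ for any $p<0.5$. The only minor imprecision is that the squared-exponential eigenvalues decay geometrically in $s^{1/l}$ rather than in $s$, which is exactly why the optimal truncation level is $T_* = O\bigl((\log j)^{l}\bigr)$ --- your stated conclusion already accounts for this.
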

As before, Proposition~\ref{prop:bound_information} stems directly from Theorem 5 in~\cite{srinivas2009gaussian} which provides the growth bound for the information gain $\gamma_j$ for common kernels.  With these propositions, we can now state and prove two Lemmas required for proving Theorem~\ref{thm:algorithm}.  The first Lemma will bound the simple regret $r_i$ by our simple regret bound $F_i$ defined in equation~\eqref{eq:F}.
\begin{lemma}
\label{lem:bound_simple_regret}
Let Assumption~\ref{assump:bayes} hold, and let $F_i$ be as in~\eqref{eq:F}. The simple regret $r_i$ satisfies the following inequality with respect to the Gaussian Process $\pi$ (Line 10):
\begin{equation}
    \prob_{\pi}[r_i \leq F_i] \geq 1-\delta.
\end{equation}
\end{lemma}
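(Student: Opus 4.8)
The plan is to derive the bound $r_i \le F_i$ deterministically on the high-probability event furnished by Proposition~\ref{prop:bounding}, and then transfer the probability statement directly. Recall $r_i = J^* - J(z_i)$ and $F_i = 2\beta_i\sigma_{i-1}(z_i)$. The only two ingredients I would use are (i) the uniform confidence band $|\mu_{i-1}(z) - J(z)| \le \beta_i \sigma_{i-1}(z)$ for all $z \in \mathbb{Z}$, which holds with probability $\ge 1-\delta$ by Proposition~\ref{prop:bounding}, and (ii) the defining optimality of the UCB iterate $z_i = \argmax_{z \in \mathbb{Z}} \mu_{i-1}(z) + \beta_i\sigma_{i-1}(z)$ from Line 3 of Algorithm~\ref{alg:algorithm}.

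First I would fix a maximizer $z^* \in \argmax_{z \in \mathbb{Z}} J(z)$, which exists since $\mathbb{Z}$ is compact by Assumption~\ref{assump:bayes} and membership in the RKHS of a (continuous) kernel makes $J$ continuous; hence $J^* = J(z^*)$ is attained. Working on the confidence event, the upper half of the band at $z^*$ gives $J^* = J(z^*) \le \mu_{i-1}(z^*) + \beta_i\sigma_{i-1}(z^*)$. Because $z_i$ maximizes the very same acquisition function, I can replace $z^*$ by $z_i$ without decreasing the value, obtaining $J^* \le \mu_{i-1}(z_i) + \beta_i\sigma_{i-1}(z_i)$. Next I would apply the lower half of the band at $z_i$, namely $\mu_{i-1}(z_i) \le J(z_i) + \beta_i\sigma_{i-1}(z_i)$, and substitute it into the previous inequality. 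This yields the chain
\begin{equation}
    J^* - J(z_i) \;\le\; \bigl(\mu_{i-1}(z_i) - J(z_i)\bigr) + \beta_i\sigma_{i-1}(z_i) \;\le\; 2\beta_i\sigma_{i-1}(z_i) = F_i,
\end{equation}
which is exactly $r_i \le F_i$.

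Finally, since every step after fixing $z^*$ is a deterministic consequence of the single event $\{|\mu_{i-1}(z) - J(z)| \le \beta_i\sigma_{i-1}(z)\ \forall z,\forall i\}$, and that event has probability at least $1-\delta$ under $\pi$ by Proposition~\ref{prop:bounding}, I would conclude $\prob_{\pi}[r_i \le F_i] \ge 1-\delta$, completing the proof. I do not expect a genuine obstacle here: the argument is the standard GP-UCB confidence-interval sandwich. The only points needing care are bookkeeping ones, namely that Proposition~\ref{prop:bounding} supplies a \emph{uniform} band over all $z \in \mathbb{Z}$ (so that it may be invoked simultaneously at $z^*$ and at $z_i$ under one event), and that the maximizer $z^*$ indeed exists, which is why I would flag the compactness of $\mathbb{Z}$ and the continuity of $J$ at the outset.
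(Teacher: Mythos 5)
Your proposal is correct and takes essentially the same route as the paper: the paper's chain $r_i = J^* - J(z_i) \leq \beta_i\sigma_{i-1}(z_i) + \mu_{i-1}(z_i) - J(z_i) \leq 2\beta_i\sigma_{i-1}(z_i) = F_i$ is precisely your confidence-band-plus-UCB-optimality sandwich, invoking the uniform band of Proposition~\ref{prop:bounding} and the argmax in Line 3. Your added bookkeeping (existence of the maximizer via compactness and continuity, and the uniformity of the band so both evaluation points sit under one $1-\delta$ event) is sound and merely makes explicit what the paper leaves implicit.
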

\begin{proof}
By definition of the simple regret $r_i$, the optimal sample $z_i$ (Line 3), the simple regret bound $F_i$, and the first inequality in Proposition~\ref{prop:bounding}, we have the following:
\begin{align}
    r_i & = J^* - J(z_i), \\
    & \leq \beta_i \sigma_{i-1}(z_i) + \mu_{i-1}(z_i) - J(z_i), \withprob \geq 1-\delta \\
    & \leq 2 \beta_i \sigma_{i-1}(z_i) = F_i, \withprob \geq 1-\delta.
\end{align}
\end{proof}

\noindent We can also bound the growth of $F_i$ .
\begin{lemma}
\label{lem:Fbound}
Let Assumption~\ref{assump:bayes} hold and let $\delta \in (0,1]$. Then, 
\begin{equation}
    \sum_{i=1}^j F_i \leq O\left(\sqrt{j}\left(B\sqrt{\gamma_j} + R \sqrt{\gamma_j\left(\gamma_j + \ln{\frac{1}{\delta}}\right)}\right)\right),
\end{equation}
with probability $\geq 1-\delta$ with respect to the Gaussian Process $\pi$ (Line 10),  and with $F_i$ as in~\eqref{eq:F}.
\end{lemma}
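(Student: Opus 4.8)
The plan is to reduce $\sum_{i=1}^j F_i$ to a product of a uniform bound on the scale factor and a Cauchy--Schwarz-controlled sum of posterior standard deviations, then to close the latter using the maximum information gain. I would begin by expanding $F_i$ via equation~\eqref{eq:F} and factoring out the scale factor. Rather than arguing monotonicity of $\beta_i$ directly from the determinant expression in equation~\eqref{eq:beta}, I would invoke the second inequality of Proposition~\ref{prop:bounding}, which supplies the uniform bound $\beta_i \le B + R\sqrt{2(\gamma_j + 1 + \ln\frac{1}{\delta})}$ simultaneously for all $i = 1,\dots,j$ with probability $\ge 1-\delta$. This is precisely where the probabilistic qualifier in the statement enters, so that
\[
    \sum_{i=1}^j F_i = 2\sum_{i=1}^j \beta_i \sigma_{i-1}(z_i) \le 2\Bigl(B + R\sqrt{2\bigl(\gamma_j + 1 + \ln\tfrac{1}{\delta}\bigr)}\Bigr)\sum_{i=1}^j \sigma_{i-1}(z_i)
\]
holds with probability $\ge 1-\delta$ with respect to the Gaussian Process $\pi$.

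Next I would control the remaining sum of posterior standard deviations. Cauchy--Schwarz gives $\sum_{i=1}^j \sigma_{i-1}(z_i) \le \sqrt{j \sum_{i=1}^j \sigma_{i-1}^2(z_i)}$, which converts the linear sum into a sum of squared variances. The crucial step is then to bound $\sum_{i=1}^j \sigma_{i-1}^2(z_i)$ by a constant multiple of $\gamma_j$. For this I would use the closed form of the information gain accrued along the sampled sequence, $\frac{1}{2}\sum_{i=1}^j \log\bigl(1 + \lambda^{-1}\sigma_{i-1}^2(z_i)\bigr)$, observe that this is at most $\gamma_j$ by the definition of the maximum in equation~\eqref{eq:max_info_gain}, and apply the elementary inequality $s^2 \le C_\lambda \log(1 + \lambda^{-1}s^2)$ valid for the bounded arguments arising from a normalized kernel. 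This yields $\sum_{i=1}^j \sigma_{i-1}^2(z_i) \le C_\lambda \gamma_j$ with $C_\lambda = 2/\log(1+\lambda^{-1})$, a deterministic estimate that adds no further failure probability.

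Finally I would assemble the pieces: substituting the variance bound gives $\sum_{i=1}^j \sigma_{i-1}(z_i) \le \sqrt{C_\lambda\, j\, \gamma_j}$, and multiplying by the uniform $\beta$-bound produces
\[
    \sum_{i=1}^j F_i \le 2\Bigl(B + R\sqrt{2\bigl(\gamma_j + 1 + \ln\tfrac{1}{\delta}\bigr)}\Bigr)\sqrt{C_\lambda\, j\, \gamma_j}.
\]
Distributing the product, absorbing $\sqrt{C_\lambda}$, the additive constant $1$, and the numerical factors into the $O(\cdot)$ notation, and separating the $B$ and $R$ contributions recovers the claimed $\sqrt{j}\bigl(B\sqrt{\gamma_j} + R\sqrt{\gamma_j(\gamma_j + \ln\frac{1}{\delta})}\bigr)$ form. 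I expect the main obstacle to be the variance-to-information-gain bound, since it is the one ingredient not handed over directly by the cited Propositions: it requires the explicit Gaussian Process information-gain formula together with the logarithmic inequality, and one must ensure the kernel is normalized so that the posterior variances stay in the range where $C_\lambda$ remains finite.
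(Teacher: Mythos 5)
Your proof is correct and follows the same skeleton as the paper's: both factor the sum as $\sum_i F_i = 2\sum_i \beta_i\sigma_{i-1}(z_i)$, apply the uniform bound $\beta_i \leq B + R\sqrt{2(\gamma_j + 1 + \ln\frac{1}{\delta})}$ from the second inequality of Proposition~\ref{prop:bounding} (which is indeed the sole source of the $1-\delta$ qualifier), and then bound $\sum_{i=1}^j \sigma_{i-1}(z_i)$ by $O(\sqrt{j\gamma_j})$. The one divergence is that the paper simply cites Lemma 4 of~\cite{chowdhury2017arxiv} for this last step, whereas you prove it inline via Cauchy--Schwarz together with the information-gain identity $I(y_A;J_A) = \frac{1}{2}\sum_{i=1}^j \log\bigl(1+\lambda^{-1}\sigma_{i-1}^2(z_i)\bigr) \leq \gamma_j$ and the elementary bound $\sigma^2 \leq \frac{\log(1+\lambda^{-1}\sigma^2)}{\log(1+\lambda^{-1})}$ for $\sigma^2 \leq 1$ --- which is precisely the standard proof of that cited lemma (it appears as Lemma~5.4 in~\cite{srinivas2009gaussian}); your constant $C_\lambda = 2/\log(1+\lambda^{-1})$ and the caveat that the kernel must be normalized so that $\sigma_{i-1}^2(z_i) \leq k(z_i,z_i) \leq 1$ are both exactly right. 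So the ``main obstacle'' you anticipated is in fact a known packaged result that the paper outsources; your version buys self-containedness (and makes explicit the deterministic, probability-free nature of the variance-to-information-gain step), at the cost of the implicit normalization hypothesis that the citation route carries silently.
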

\begin{proof}
From the definition of the simple regret bound $F_i$ and the second inequality in Proposition~\ref{prop:bounding}, we have that
\begin{equation}
    \sum_{i=1}^j F_i \leq 2\left(B + R \sqrt{2 \left(\gamma_j + 1 + \ln{\frac{1}{\delta} }\right)}\right) \sum_{i=1}^j \sigma_{i-1}(z_i),
\end{equation}
with probability $\geq 1-\delta$.
The result then stems via Lemma 4 in~\cite{chowdhury2017arxiv}, which states that $\sum_{i=1}^j \sigma_{i-1}(z_i) \leq O\left(\sqrt{j\gamma_j}\right)$.
\end{proof}
\noindent Now we can prove Theorem~\ref{thm:algorithm}.

\begin{proof}
The proof for this theorem requires two parts.  First, we need to prove that the Algorithm terminates, and second, we need to prove that the Algorithm outputs a $J^e$ satisfying the stated inequalities.  The first part of this proof follows a contradiction.  Specifically, assume $\nexists~i^* < \infty$ such that $F_{i^*} \leq \epsilon$ where $\epsilon > 0$ as per the assumptions in Theorem~\ref{thm:algorithm}.  In other words, this implies that $F_i > \epsilon > 0~\forall~ i = 1,2,\dots$.  Then consider the running average of $F_i$ and Lemma~\ref{lem:Fbound}:
\begin{align}
    \epsilon & < \lim_{j \to \infty}~\frac{1}{j}\sum_{i=1}^j F_i, \\
    & \leq \lim_{j \to \infty} O\left(\frac{\sqrt{j}\left(B\sqrt{\gamma_j} + R\sqrt{\gamma_j\left(\gamma_j + \ln{\frac{1}{\delta}}\right)}\right)}{j}\right).
\end{align}
Now, pick a kernel that satisfies the inequality in Proposition~\ref{prop:bound_information}, which is guaranteed to exist.  Then,
\begin{equation}
    \epsilon < \lim_{j \to \infty} O\left(\frac{j^z \log(j)}{j}\right) = 0,~\mathrm{as}~z < 1,
\end{equation}
which is a contradiction, as $\epsilon \in \mathbb{R}_{++}$.  This proves termination at some $i^* < \infty$.  It remains to identify an upper bound $J^e$ that satisfies the required inequality in Theorem~\ref{thm:algorithm}.

For the second part of the proof, due to the first inequality in Proposition~\ref{prop:bounding} and Line 7 in Algorithm~\ref{alg:algorithm}, we have the following inequality at termination $i^*$ and with probability $\geq 1-\delta$:
\begin{equation}
    J^* \leq \mu_{i^*-1}(z_{i^*}) + \beta_{i^*}\sigma_{i^*-1}(z_{i^*}) = J^e.
\end{equation}
This resolves one of the inequalities in Theorem~\ref{thm:algorithm}.  For the second inequality, by Proposition~\ref{prop:bounding} we know that at the sample point at termination $z_{i^*}$,
\begin{equation}
    |\mu_{i^*-1}(z_{i^*}) - J(z_{i^*})| \leq \beta_{i^*}\sigma_{i-1}(z_{i^*}) \withprob \geq 1-\delta.
\end{equation}
Taking one of the inequalities from above and adding $\beta_{i^*}\sigma_{i^*-1}(z_{i^*})$ yields the following, as $J(z_{i^*}) \leq J^*$:
\begin{equation}
    J^e - J^* \leq \epsilon \withprob \geq 1-\delta.
\end{equation}
Now, by definition of simple regret and Proposition~\ref{prop:bounding},
\begin{align}
    \epsilon \geq r_{i^*} & = J^* - J(z_{i^*}), \\
    & \geq J^* - J^e \withprob \geq 1-\delta.
\end{align}
As a result,
\begin{equation}
    |J^* - J^e| \leq \epsilon \withprob \geq 1-\delta,
\end{equation}
completing the proof.  Also to note, the probabilities throughout the proof are taken with respect to the Gaussian Process $\pi$ estimating the objective $J$ based on the dataset $\mathbb{D}_{i^*}$ at termination $i^*$.
\end{proof}

\noindent Now we will move to use Algorithm~\ref{alg:algorithm} to identify a close lower bound to $\rho^*$ as required of our overarching problem.

\subsection{EVALUATING SYSTEM PERFORMANCE}
\label{sec:system_evaluation}
Simply put, the approach we will take in this subsection will amount to two uses of Algorithm~\ref{alg:algorithm} where we identify a lower bound $\hat \rho^e$ to $\hat \rho^*$ and an upper bound $e^e$ to $e^*$.  Both $\hat \rho^*$ and $e^*$ are defined in equation~\eqref{eq:simulator_opt}.  In order to use Theorem~\ref{thm:algorithm} however, we need to state or prove that our optimization problems in equation~\eqref{eq:simulator_opt} satisfy Assumption~\ref{assump:bayes}.
\begin{assumption}
\label{assump:testing_assumption}
Both optimization problems in equation~\eqref{eq:simulator_opt} have a $B,R \in \mathbb{R}_{++}$ and a sampling scheme generating (noisy) measurements $y_i$ for each chosen sample $d_i$ such that they satisfy Assumption~\ref{assump:bayes}.
\end{assumption}
While this assumption seems restrictive, it has two separate rationalizations.  First, any universal kernel's Reproducing Kernel Hilbert Space is equivalent to the space of all continuous functions over the kernel's domain, \textit{i.e.} for a kernel $k: \mathbb{Z} \times \mathbb{Z} \to \mathbb{R}_+$, $\mathcal{R}(k) = C(\mathbb{Z})$~\cite{micchelli2006universal}. Effectively then, if we assume our objective functions for optimization problems~\eqref{eq:simulator_opt} are continuous in $d \in \mathcal{D}$, and we use a universal kernel $k$, then we know that our objective functions $J \in \mathcal{R}(k)$ and that $\exists~B \suchthat \|J\|_{RKHS} \leq B$.  Furthermore, the assumption that our objective functions are continuous in $d$ is not too restrictive.  Consider Example~\ref{example1} for instance where this holds.  The second rationalization stems from the fact that we are optimizing for the expected value of a random variable whose variance is bounded - $\rho$ cannot take infinite values as per Definition~\ref{def:robustness}.  As a result, Hoeffding's Lemma guarantees that any single realization of this random variable corresponds to a sub-gaussian random variable after mean-shifting~\cite{massart2007concentration}.  Therefore, there exists a variance proxy $R\in\mathbb{R}_{++}$ to satisfy the second half of Assumption~\ref{assump:testing_assumption}.  For context, this is the reason that we do not optimize for an arbitrary risk measure, as samples of this measure need not be sub-gaussian.  This would frustrate application of our developed approach.  Optimizing for such a measure, however, is the subject of future work.

With this assumption, we can state two Lemmas that will be used to prove our second contribution.
\begin{lemma}
\label{lemma:sim_robustness}
Let Assumption~\ref{assump:testing_assumption} hold, let $\delta_0 \in (0,1]$, and let $\epsilon_0 \in \mathbb{R}_{++}$.  Applying Algorithm~\ref{alg:algorithm} to solve the first optimization problem in equation~\eqref{eq:simulator_opt} yields $\hat \rho^e$ such that
\begin{equation}
    \prob_{\pi_0}\left[\hat{\rho}^* \geq \hat{\rho}^e\right]~\mathrm{and}~\prob_{\pi_0}\left[|\hat{\rho}^* - \hat{\rho}^e| \leq \epsilon_0 \right]~\mathrm{are}~\geq 1-\delta_0.
\end{equation}
where $\pi_0$ is the Gaussian Process generated by Algorithm~\ref{alg:algorithm}.
\end{lemma}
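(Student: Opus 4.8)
The plan is to reduce the statement to a direct application of Theorem~\ref{thm:algorithm}. The obstacle to doing so immediately is that the first problem in~\eqref{eq:simulator_opt} is a \emph{minimization}, whereas Algorithm~\ref{alg:algorithm} and Theorem~\ref{thm:algorithm} are posed for the \emph{maximization} problem~\eqref{eq:basic_optimization} and return an \emph{upper} bound. I would therefore negate the objective: define $J(d) = -\expect_{\hat{\Pi}(d)}\left[\rho\left(\hat{\phi}^{\hat{U}}\left(\hat{x}_0\right)\right)\right]$ over the decision space $\mathbb{Z} = \mathcal{D}$, so that $J^* = \max_{d \in \mathcal{D}} J(d) = -\hat{\rho}^*$. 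Running Algorithm~\ref{alg:algorithm} on $J$ is exactly what ``applying Algorithm~\ref{alg:algorithm} to solve the first optimization problem'' means, with each noisy sample taken as the negation $-\rho\left(\hat{\phi}^{\hat{U}}\left(\hat{x}_0\right)\right)$ of a single simulator rollout at the queried $d$.

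The one thing that needs checking is that $J$ still satisfies Assumption~\ref{assump:bayes}, which is where the content of the argument lives. By Assumption~\ref{assump:testing_assumption} the un-negated objective $-J$ already satisfies it, and both properties of Assumption~\ref{assump:bayes} are invariant under a sign flip. First, $\mathcal{R}(k)$ is a vector space with a homogeneous norm, so $-J \in \mathcal{R}(k)$ with $\|J\|_{RKHS} = \|-J\|_{RKHS} \leq B$. Second, if a single robustness sample is an $R$-sub-Gaussian noisy measurement of $-J$, then its negation is an $R$-sub-Gaussian noisy measurement of $J$, since negating a sub-Gaussian random variable preserves its variance proxy. The compactness and convexity of $\mathbb{Z} = \mathcal{D}$ carry over unchanged. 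Hence $J$ meets Assumption~\ref{assump:bayes} with the same constants $B,R$.

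With this in hand, I would invoke Theorem~\ref{thm:algorithm} on $J$ with $\delta = \delta_0$ and $\epsilon = \epsilon_0$. At termination Algorithm~\ref{alg:algorithm} returns a Gaussian Process $\pi_0$ and an upper bound $J^e$ satisfying $\prob_{\pi_0}\left[J^* \leq J^e\right] \geq 1-\delta_0$ and $\prob_{\pi_0}\left[|J^* - J^e| \leq \epsilon_0\right] \geq 1-\delta_0$. Defining $\hat{\rho}^e = -J^e$ then translates these two guarantees back to the minimization problem: the event $J^* \leq J^e$ is identical to $-\hat{\rho}^* \leq -\hat{\rho}^e$, i.e. $\hat{\rho}^* \geq \hat{\rho}^e$, and $|J^* - J^e| = |\hat{\rho}^* - \hat{\rho}^e|$, so the tolerance bound transfers verbatim. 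This yields both claimed inequalities with probability $\geq 1-\delta_0$ with respect to the same $\pi_0$, completing the argument. The only genuinely substantive step is the invariance check of Assumption~\ref{assump:bayes} under negation; everything else is bookkeeping on the sign of the objective and its bounds.
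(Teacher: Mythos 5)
Your proposal is correct and takes essentially the same route as the paper, whose entire proof is the one-liner ``Use Theorem~\ref{thm:algorithm} with $\delta = \delta_0$ and $\epsilon = \epsilon_0$''; you have merely made explicit the sign-flip reduction from the minimization in~\eqref{eq:simulator_opt} to the maximization form~\eqref{eq:basic_optimization}, together with the (correct) check that Assumption~\ref{assump:bayes} is invariant under negation, all of which the paper leaves implicit. This is a faithful, slightly more careful rendering of the intended argument rather than a different one.
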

\begin{proof}
Use Theorem~\ref{thm:algorithm} with $\delta = \delta_0$ and $\epsilon = \epsilon_0$.
\end{proof}
\begin{lemma}
\label{lemma:sim_error}
Let Assumption~\ref{assump:testing_assumption} hold, let $\delta_1 \in (0,1]$, and let $\epsilon_1 \in \mathbb{R}_{++}$.  Applying Algorithm~\ref{alg:algorithm} to solve the second optimization problem in equation~\eqref{eq:simulator_opt} yields $e^e$ such that
\begin{equation}
     \prob_{\pi_1}\left[e^* \leq e^e\right]~\mathrm{and}~\prob_{\pi_1}\left[|e^* - e^e| \leq \epsilon_1 \right]~\mathrm{are}~ \geq 1-\delta_1,
\end{equation}
where $\pi_1$ is the Gaussian Process generated by Algorithm~\ref{alg:algorithm}.
\end{lemma}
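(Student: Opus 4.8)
The plan is to apply Theorem~\ref{thm:algorithm} directly, mirroring the (one-line) proof of Lemma~\ref{lemma:sim_robustness} but exploiting a structural simplification. The crucial observation is that the second optimization problem in equation~\eqref{eq:simulator_opt} defining $e^*$ is a \emph{maximization} over $d \in \mathcal{D}$, which matches the canonical form of equation~\eqref{eq:basic_optimization} exactly. Consequently, no sign-flip or reformulation is needed: I would set $\mathbb{Z} = \mathcal{D}$, identify the objective $J(d) = \expect_{\Pi(d),\hat{\Pi}(d)}[\,|\rho(\phi^U(x_0)) - \rho(\hat{\phi}^{\hat{U}}(\hat{x}_0))|\,]$, and read off $J^* = e^*$ together with the algorithm's output $J^e = e^e$.

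Before invoking the theorem I would confirm its hypotheses. Assumption~\ref{assump:testing_assumption} is stated precisely so that both optimization problems in equation~\eqref{eq:simulator_opt} — and in particular this one — admit constants $B, R \in \mathbb{R}_{++}$ and a sampling scheme producing noisy measurements $y_i$ satisfying Assumption~\ref{assump:bayes}. Thus the compact-convex domain condition, the bound $\|J\|_{RKHS} \leq B$, and the $R$-sub-Gaussian sample condition are all in force for the $J$ defined above. This is exactly the input required by Theorem~\ref{thm:algorithm}.

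With the hypotheses verified, I would apply Theorem~\ref{thm:algorithm} under the substitutions $\delta = \delta_1$, $\epsilon = \epsilon_1$, and $\pi = \pi_1$. The theorem then delivers, at termination, an output $e^e$ for which $\prob_{\pi_1}[e^* \leq e^e] \geq 1-\delta_1$ and $\prob_{\pi_1}[\,|e^* - e^e| \leq \epsilon_1\,] \geq 1-\delta_1$, which is precisely the claim of the lemma.

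I do not anticipate a genuine analytical obstacle, since all the heavy lifting — termination of the algorithm together with the two-sided bound on $|J^* - J^e|$ — is already packaged inside Theorem~\ref{thm:algorithm}. The only point requiring care is the \emph{direction} of the bound, and I would make it explicit to avoid conflating the two lemmas: whereas Lemma~\ref{lemma:sim_robustness} treats a minimization and therefore must negate the objective so that the theorem's upper bound on the maximum becomes a \emph{lower} bound on $\hat{\rho}^*$, here the problem is natively a maximization, so the theorem's upper bound on the maximizer is \emph{directly} the desired upper bound $e^e \geq e^*$. No negation is needed, and that asymmetry is the single conceptual distinction between this proof and the preceding one.
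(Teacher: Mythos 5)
Your proposal is correct and matches the paper's proof exactly: the paper's argument is the one-line invocation of Theorem~\ref{thm:algorithm} with $\delta = \delta_1$ and $\epsilon = \epsilon_1$, which is precisely your application (your added remark that the maximization form needs no negation, unlike the minimization in Lemma~\ref{lemma:sim_robustness}, is a helpful clarification the paper leaves implicit).
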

\begin{proof}
Use Theorem~\ref{thm:algorithm} with $\delta = \delta_1$ and $\epsilon = \epsilon_1$.
\end{proof}

Now we can state and prove our main result.

\begin{theorem}
\label{thm:lower_bound}
Let $\hat \rho^e, \epsilon_0, \delta_0, \pi_0$ be as defined in Lemma~\ref{lemma:sim_robustness} and let $e^e, \epsilon_1, \delta_1, \pi_1$ be as defined in Lemma~\ref{lemma:sim_error}.  Define $\rho^e = \hat \rho^e - e^e$ and $\epsilon = 2e^e + \epsilon_0 + \epsilon_1$, then
\begin{gather}
    \prob_{\pi_0,\pi_1}\left[\rho^* \geq \rho^e \right] \geq (1-\delta_0)(1-\delta_1), \\
     \prob_{\pi_0,\pi_1}\left[|\rho^* - \rho^e| \leq \epsilon \right] \geq (1-\delta_0)(1-\delta_1).
\end{gather}
with $\rho^*$ as in equation~\eqref{eq:minimum_true_robustness}.
\end{theorem}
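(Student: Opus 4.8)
The plan is to collapse the probabilistic statement onto a single deterministic inequality relating the three optimized quantities, and then to import the two Lemmas' confidence bounds via an independence argument on the Gaussian Processes.

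First I would establish the deterministic sandwich $\hat{\rho}^* - e^* \le \rho^* \le \hat{\rho}^* + e^*$, equivalently $|\rho^* - \hat{\rho}^*| \le e^*$. The engine is the elementary bound $|\expect[X]| \le \expect[|X|]$ applied to $X = \rho(\phi^U(x_0)) - \rho(\hat{\phi}^{\hat{U}}(\hat{x}_0))$ under the coupled law $\Pi(d)\times\hat{\Pi}(d)$. Since $\rho(\phi^U(x_0))$ depends only on the true trajectory and $\rho(\hat{\phi}^{\hat{U}}(\hat{x}_0))$ only on the simulator trajectory, linearity of expectation gives, for every fixed $d \in \mathcal{D}$,
\[
\left| \expect_{\Pi(d)}[\rho(\phi^U(x_0))] - \expect_{\hat{\Pi}(d)}[\rho(\hat{\phi}^{\hat{U}}(\hat{x}_0))] \right| \le \expect_{\Pi(d),\hat{\Pi}(d)}\!\left[\left|\rho(\phi^U(x_0)) - \rho(\hat{\phi}^{\hat{U}}(\hat{x}_0))\right|\right] \le e^*.
\]
For the lower bound I note this forces $\expect_{\Pi(d)}[\rho(\phi^U(x_0))] \ge \hat{\rho}^* - e^*$ for every $d$ (the simulator expectation is $\ge \hat{\rho}^*$ and the gap term is $\le e^*$), so taking the minimum over $d$ yields $\rho^* \ge \hat{\rho}^* - e^*$. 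For the matching upper bound I would evaluate $\rho^*$ at a minimizer $\hat{d}$ of the simulator objective: because $\rho^*$ is itself a minimum, $\rho^* \le \expect_{\Pi(\hat{d})}[\rho(\phi^U(x_0))]$, which the displayed inequality bounds by $\hat{\rho}^* + e^*$.

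Next I would convert to probability. Since Algorithm~\ref{alg:algorithm} is run twice on independent sampling problems, the Gaussian Processes $\pi_0$ and $\pi_1$ are independent; hence the good event of Lemma~\ref{lemma:sim_robustness} (on which $\hat{\rho}^* \ge \hat{\rho}^e$ and $|\hat{\rho}^* - \hat{\rho}^e| \le \epsilon_0$) and that of Lemma~\ref{lemma:sim_error} (on which $e^* \le e^e$ and $|e^* - e^e| \le \epsilon_1$) intersect with probability at least $(1-\delta_0)(1-\delta_1)$. On this joint event the first claim is immediate: $\rho^* \ge \hat{\rho}^* - e^* \ge \hat{\rho}^e - e^e = \rho^e$, using $\hat{\rho}^* \ge \hat{\rho}^e$ and $e^* \le e^e$. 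For the second claim I split the error through the idealized lower endpoint $\hat{\rho}^* - e^*$:
\[
|\rho^* - \rho^e| \le \underbrace{\big|\rho^* - (\hat{\rho}^* - e^*)\big|}_{\le\, 2e^* \le 2e^e} + \underbrace{\big|(\hat{\rho}^* - e^*) - (\hat{\rho}^e - e^e)\big|}_{\le\, \epsilon_0 + \epsilon_1} \le 2e^e + \epsilon_0 + \epsilon_1 = \epsilon,
\]
where the first term uses the Step-1 sandwich ($\rho^*$ lies within the interval length $2e^*$ of the endpoint $\hat{\rho}^*-e^*$) together with $e^* \le e^e$, and the second is the triangle inequality against $\epsilon_0$ and $\epsilon_1$. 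This recovers exactly the stated $\epsilon$, and both claims hold on the joint event, giving probability $(1-\delta_0)(1-\delta_1)$.

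I expect the deterministic sandwich of the first step to be the main obstacle, as it is the only place where the min/max structure of the three objectives interacts. Care is needed both in the $|\expect[X]|\le\expect[|X|]$ step, where the expectation must be taken under precisely the coupled law $\Pi(d)\times\hat{\Pi}(d)$ that defines $e^*$, and in passing from the pointwise-in-$d$ estimate to the optimized quantities: the lower bound survives the minimization directly, but the upper bound is not pointwise and must be obtained by evaluating the true objective at the simulator's minimizer $\hat{d}$. Once the sandwich is secured, the remainder is bookkeeping with the triangle inequality and the independence of $\pi_0,\pi_1$.
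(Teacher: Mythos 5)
Your proposal is correct and follows essentially the same route as the paper: establish the deterministic sandwich $|\rho^* - \hat \rho^*| \leq e^*$, then intersect the two lemmas' confidence events (treating $\pi_0,\pi_1$ as independent, exactly as the paper does when it multiplies the probabilities) and push the triangle inequality through to $\epsilon = 2e^e + \epsilon_0 + \epsilon_1$. Your only departures are cosmetic and, if anything, more complete than the original: you prove the sandwich directly via $|\expect[X]| \leq \expect[|X|]$ pointwise in $d$ and evaluation of the true objective at the simulator's minimizer, where the paper merely asserts $|\rho^* - \hat\rho^*| \leq e^*$ and gestures at an unproduced contradiction argument, and your split through the endpoint $\hat\rho^* - e^*$ (yielding $2e^* \leq 2e^e$ plus $\epsilon_0 + \epsilon_1$) is an equivalent rearrangement of the paper's chain through $|\rho^* - \hat\rho^e|$.
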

\begin{proof}
To start this proof, we note we can modify optimization problem~\eqref{eq:minimum_true_robustness} using the following result:
\begin{align}
    \min_{z \in Z}~J(z)\geq \min_{z \in Z}~J_2(z) - \max_{z \in Z}~|J(z) - J_2(z)|.
\end{align}
Specifically, replace $z \in Z$ with $d \in \mathcal{D}$, $J(z)$ with the objective function for optimization problem~\eqref{eq:minimum_true_robustness}, and $J_2(z)$ as the objective function for the first optimization problem in equation~\eqref{eq:simulator_opt}.  Then, via linearity of the expectation operator and independence of $\Pi(d),\hat\Pi(d)$, we get the following:
\begin{equation}
    \rho^* \geq \hat \rho^* - e^*,
\end{equation}
with $\hat\rho^*,e^*$ as defined in equation~\eqref{eq:simulator_opt}.  Then by Lemmas~\ref{lemma:sim_robustness} and~\ref{lemma:sim_error} we get our first inequality:
\begin{equation}
\prob_{\pi_0,\pi_1}\left[\rho^* \geq \rho^e \right] \geq (1-\delta_0)(1-\delta_1).
\end{equation}

For the second inequality, we know that,
\begin{equation}
    \label{eq:base_inequality}
    |\rho^* - \hat \rho^*| \leq e^*,
\end{equation}
where $\rho^*$ is defined in equation~\eqref{eq:minimum_true_robustness} and $\hat \rho^*, e^*$ are defined in equation~\eqref{eq:simulator_opt}.  For context, the inequality in~\eqref{eq:base_inequality} can be proven through a contradiction, though it offers little insight so it will not be produced here.  However, if you assume that $\rho^* - \hat \rho^* > e^*$, then evaluating each objective function at one of their minimizers results in a contradiction by definition of $e^*$.  Doing the same for the reverse inequality proves the statement.  Then by Lemma~\ref{lemma:sim_robustness},
\begin{align}
    |\rho^* - \hat \rho^*| \leq e^e + e_1 \withprob \geq 1-\delta_1.
\end{align}
Furthermore,
\begin{align}
    e^e + e_1 & \geq |\rho^* - \hat \rho^* + \hat \rho^e - \hat \rho^e| \withprob \geq 1-\delta_1\\
    & \geq |\rho^* - \hat \rho^e| - \epsilon_0 \withprob \geq (1-\delta_0)(1-\delta_1).
\end{align}
Then with probability $\geq (1-\delta_0)(1-\delta_1)$,
\begin{align}
    e^e+\epsilon_0+\epsilon_1 & \geq |\rho^* - \hat \rho^e + e^e - e^e|, \\
    & \geq |\rho^* - \rho^e| - e^e.
\end{align}
The proof concludes by defining $\epsilon = 2e^e + \epsilon_0 + \epsilon_1$.
\end{proof}

\begin{figure*}[t]
    \centering
    \includegraphics[width = \textwidth]{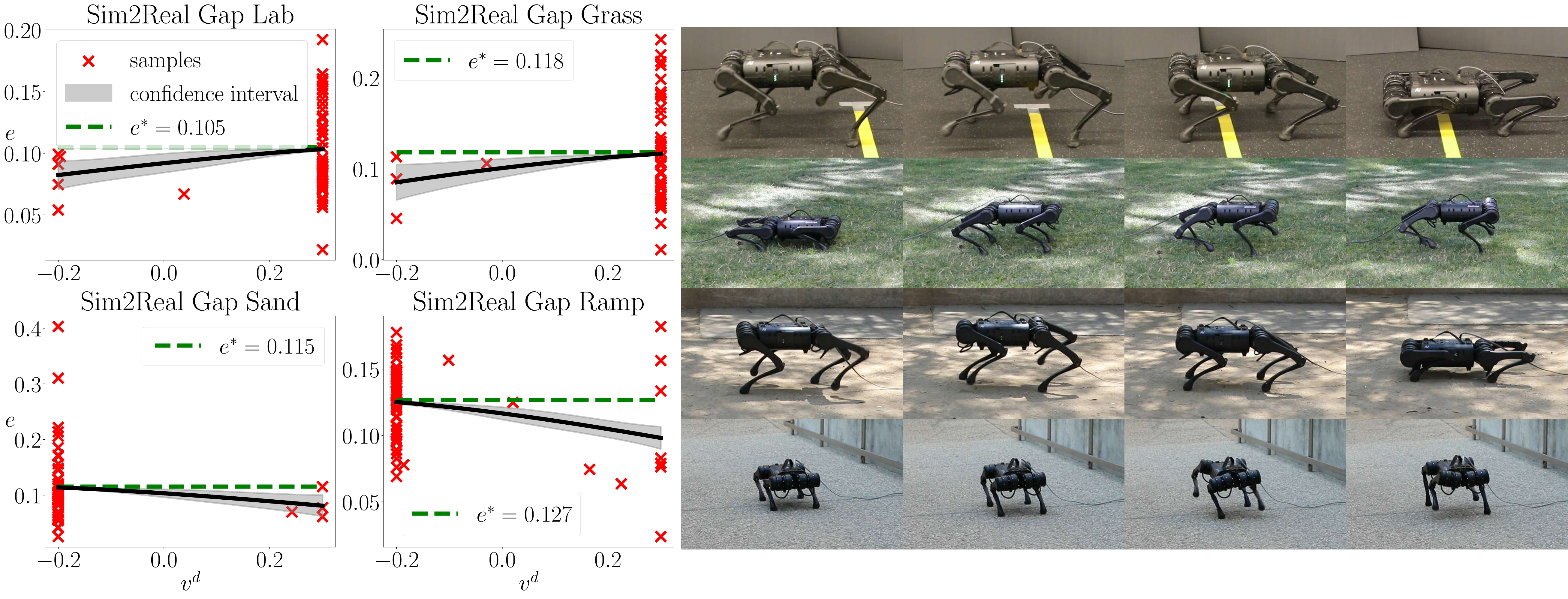}
    \vspace{-0.5 cm}
    \caption{The results and corresponding gait tiles for the quadruped walking in the four different environments in which we attempt to bound the Sim2Real gap with respect to the robustness measure $\rho$ in equation~\eqref{eq:example_optimization_robustness} in Section~\ref{sec:experimental}.  As expected, the quadruped performs best in the lab setting (lowest $e^*$) and worst on an $\sim4^o$ ramp where it has to fight against gravity while walking backwards (highest $e^*$).  Interestingly, lightly sanded ground and slightly muddy grass offer similar slipping conditions - enough so that the resulting differences are within the same tolerance $(0.003)$ of each other.}
    \label{fig:experiment_compilation}
    \vspace{-0.3 cm}
\end{figure*}
\begin{figure}[t]
    \centering
    \hspace*{-0.2 cm}\includegraphics[width = 0.495\textwidth]{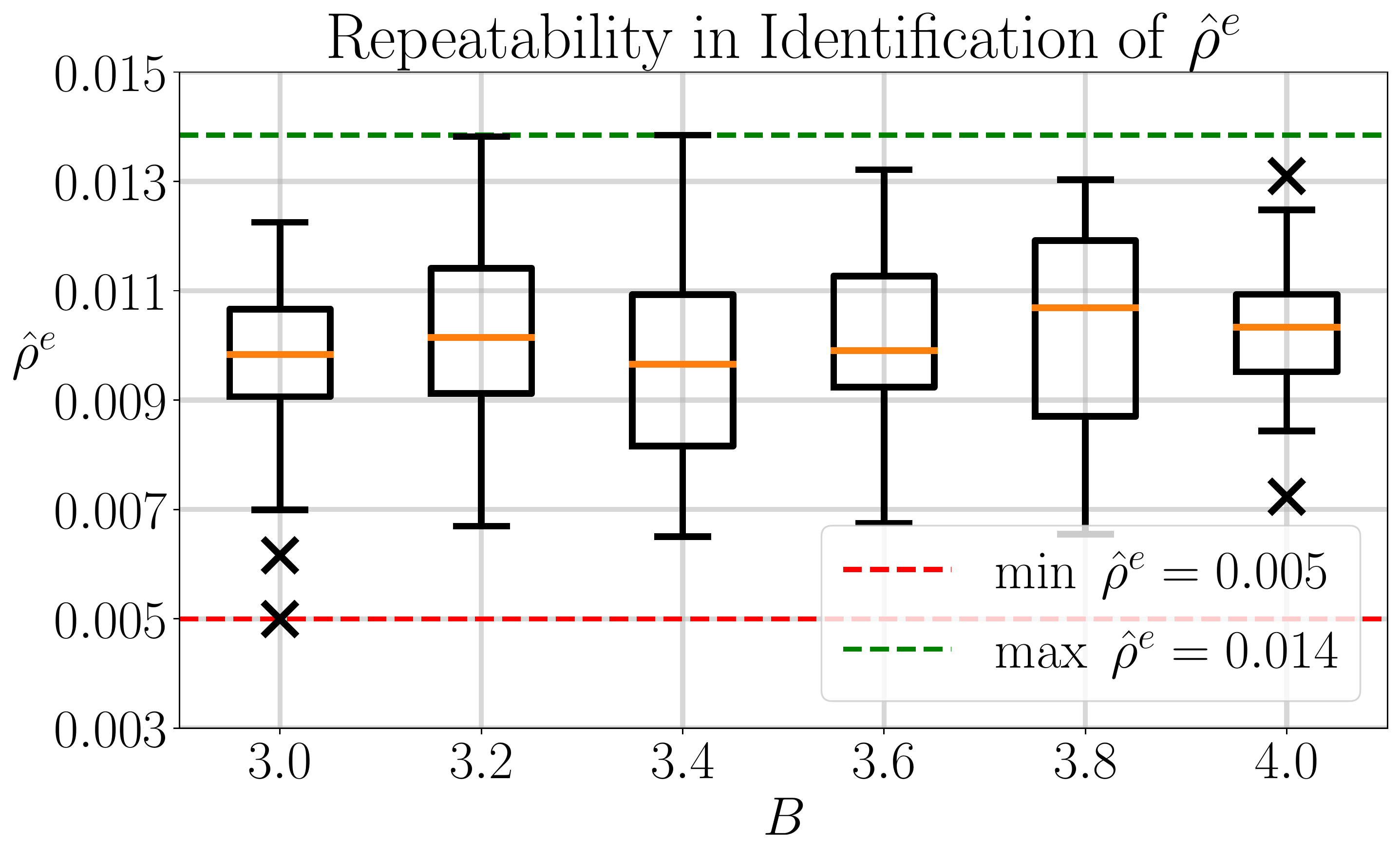}
    
    \vspace{-0.4 cm}
    \caption{Repeatability in our optimization analysis with respect to identification of $\hat \rho ^e$~\eqref{eq:example_optimization_robustness} in Section~\ref{sec:experimental}.  Over $120$ runs, $20$ for each choice of Reproducing Kernel Hilbert Space norm upper bound $B$, the procedure identifies an estimated, minimum simulator-robustness $\hat \rho^e$ that are all within tolerance $\epsilon_0 = 0.02$ of each other.  This is to be expected as per Theorem~\ref{thm:algorithm}.}
    \label{fig:repeatability}
    \vspace{-0.5 cm}
\end{figure}

This concludes the statement of our main results.  We will now illustrate Theorems~\ref{thm:algorithm} and~\ref{thm:lower_bound} through an example involving the Unitree A1 Quadruped depicted in Figure~\ref{fig:Quadrupedphoto}.

\section{EXPERIMENTAL RESULTS}
\label{sec:experimental}
For our experiment, we aim to test whether the Unitree Quadruped's forward velocity signal meets overshoot, settling time, and steady-state error criteria when driven by an IDQP-based trotting controller built off~\cite{buchli2009inverse} within the motion primitive framework in~\cite{ubellacker2021verifying}.  Mathematically, we will assume the ability to measure the state trajectory of our quadruped. We will call the true-system trajectory $\phi^U(x_0)$ and simulated state trajectory $\hat \phi^{\hat{U}}\left(\hat{x}_0\right)$. For any state signal $s$, we denote its forward-velocity component as $s_{v_x}$.   As such, our robustness measure per Definition~\ref{def:robustness} is as follows, with $v^d$ indicating the desired velocity, $\delta_o$ the maximum overshoot, and $\delta_s$ the allowable tolerance upon settling:
\begin{equation}
    \rho(s) = \min\left\{\begin{array}{c}
    \min\limits_{t \in [0,0.5]}~v^d + \delta_o - s_{v_x}(t), \\
    \min\limits_{t \in [0.5,1.5]}~\delta_s - \|v^d - s_{v_x}(t)\|.
    \end{array}\right\}
\end{equation}
\noindent Effectively, our robustness measure $\rho$ is positive for signals $s$ when the associated velocity does not exceed the desired velocity $v^d$ by more than $\delta_o$ in the first $0.5$ seconds of commanding the desired velocity and stays within a norm bound $\delta_s$ of the desired velocity $v^d$ for the next $1$ second.  In the event the desired velocity is negative, overshoot is calculated in the opposite direction (not shown).  Then, our optimization problems akin to~\eqref{eq:minimum_true_robustness}-\eqref{eq:simulator_opt} are as follows, where we abbreviate $\Delta \rho(\phi,\hat\phi) = \rho\left(\phi^U(x_0)\right) - \rho\left(\hat{\phi}^{\hat{U}}\left(\hat{x}_0\right)\right)$:
\begin{equation}
    \label{eq:example_optimization_robustness}
    \begin{gathered}
        \rho^* = \min_{v^d \in [-0.2,0.3]}\rho\left(\phi^U(x_0)\right), \\
         \hat{\rho}^* = \min_{v^d \in [-0.2,0.3]}\expect_{\hat{\Pi}(d)}\left[\rho\left(\hat{\phi}^{\hat{U}}\left(\hat{x}_0\right)\right) \right], \\
        e^* = \max_{v^d \in [-0.2,0.3]}\expect_{\Pi(d),\hat{\Pi}(d)}\left[\left|\Delta\rho(\phi,\hat\phi) \right| \right].
    \end{gathered}
\end{equation}
We aim to determine this lower bound in four different environments - (1) the AMBER lab at Caltech, (2) grass outside our building, (3) a stone ramp up to our building, and (4) a patch of sand nearby.

\begin{figure}[t]
    \centering
    \includegraphics[width = 0.49\textwidth]{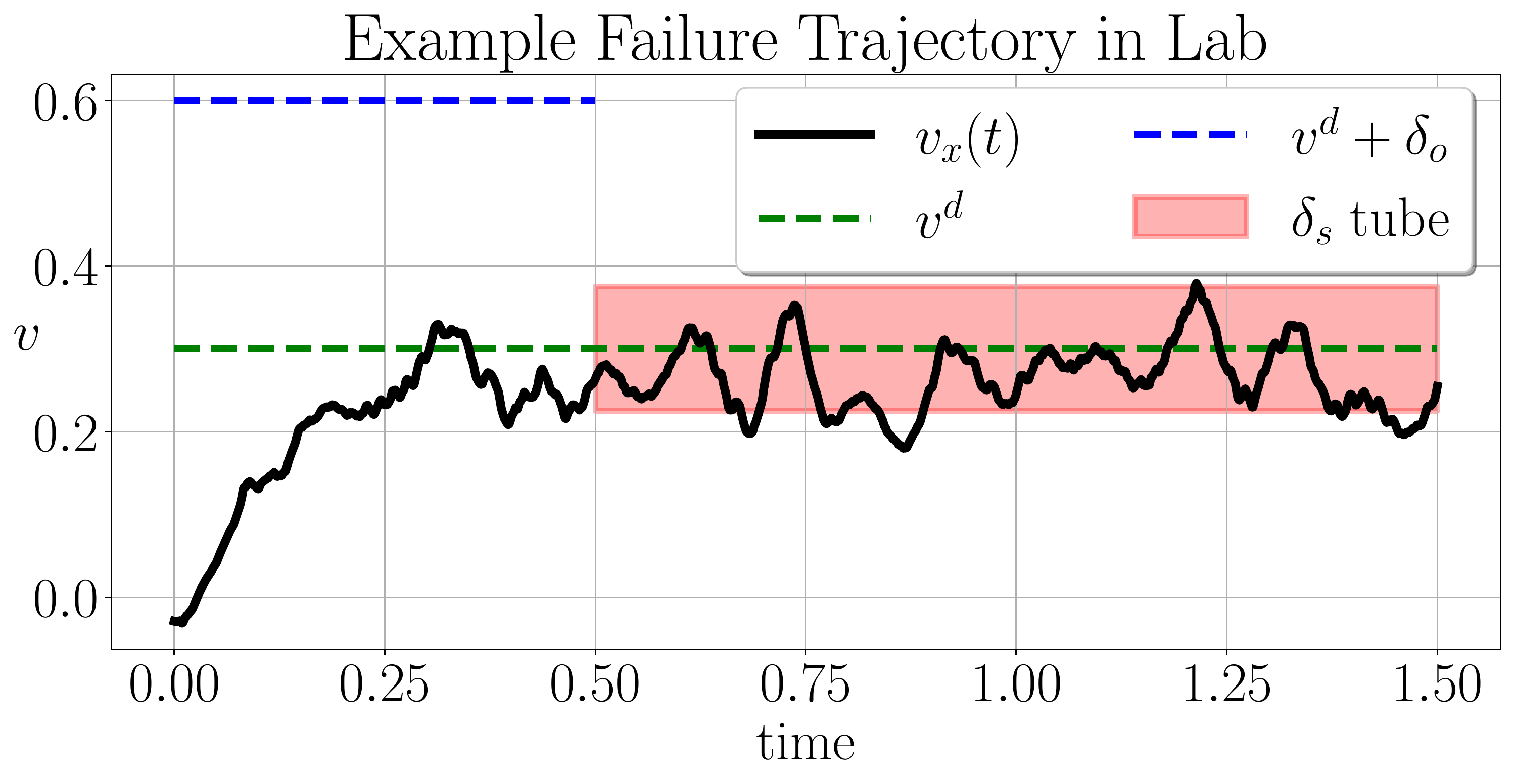}
    \vspace{-0.6 cm}
    \caption{Shown above is the velocity signal for the quadruped lab test depicted in Figure~\ref{fig:experiment_compilation}.  The quadruped's velocity signal meets the settling time and minimum overshoot criteria, but fails to meet the steady-state error criteria in its worst case test - maximum speed forward walking.}
    \label{fig:example_failure}
    \vspace{-0.6 cm}
\end{figure}

\spacing
\newidea{Repeatability of Results:} Figure~\ref{fig:repeatability} shows the results of running the optimization algorithm detailed in Section~\ref{sec:optimization} $120$ times to lower bound the minimum simulator robustness $\hat \rho^*$~\eqref{eq:example_optimization_robustness}.   Specifically,  Figure~\ref{fig:repeatability} shows the bounds generated for $20$ runs each for $6$ different Reproducing Kernel Hilbert space norm upper bounds $B$ with $R = 0.05, \delta = 1e^{-6}, \epsilon = 0.02$ for each run. Over all $120$ runs, the maximum spread of generated $\hat \rho^e$ lie within $0.009$ units of each other, indicating strong repeatability in the identification of this simulator's robustness parameter.  This is expected as the tolerance inequality in Theorem~\ref{thm:algorithm} dictates that all reported $\hat \rho^e$ values should be within $\epsilon$ of each other, and indeed they are.  This repeatability serves as our indicator regarding the efficacy of Algorithm~\ref{alg:algorithm} in solving the optimization problems we pose.

\spacing
\newidea{Sim2Real Gap:} We identified the Sim2Real Gap for the quadruped and its simulator in four environments.  Figure~\ref{fig:experiment_compilation} shows the results of upper bounding the Sim2Real gap, $e^*$ as per equation~\eqref{eq:example_optimization_robustness}, for (Top Left) the AMBER lab, (Top Right) the grass outside our building, (Bottom Left) a sandy patch outside our building, (Bottom Right) a ramp outside our building.  For each run, we initialized Algorithm~\ref{alg:algorithm} with $B = 1.5, R = 0.1, \delta = 1e^{-6}, \epsilon = 0.003$.  As expected, our Algorithm determined that the minimum Sim2Real gap occurs in the idealized lab setting - smallest $e^* = 0.105$ - and the largest error occurs when the robot walks backwards down a steep ramp - largest $e^* = 0.127$.  In all environments however, as per Theorem~\ref{thm:lower_bound}, we expect the true system to fail its specification.  The worst-case lab hardware trajectory depicted in Figure~\ref{fig:example_failure} confirms this notion as it yielded a robustness measure $\rho^{w} = -0.0454$ which we expect, as the minimum, hardware robustness in expectation $\rho^e \geq \hat \rho^e - e^* = 0.014 - 0.105 = -0.091$ as per Theorem~\ref{thm:lower_bound}.  The authors note that the likely reason for this failure is battery voltage degradation as it loses charge.  This causes the motors to lose power over time and slowly become unable to achieve the maximum desired forward walking speed.  In the other environments the failures were all slipping related.  The algorithm chose the direction and speed which had the highest chance of making the quadruped slip or topple (ramp).  This is what led to the two catastrophic failures you see in the video~\cite{video}.

\section{CONCLUSION}
In this letter, the authors proposed a two-step approach to verification of arbitrary systems subject to an operational specification which has a quantifiable measure of satisfaction.  We show that we can leverage system simulators to accurately lower bound the true-system's capacity to satisfy its specification and also identify the Sim2Real gap between our system simulator and its hardware counterpart.  We demonstrate both repeatability of our results and the ability of our approach to discriminate between different environments in determination of the Sim2Real Gap.  Future work aims to utilize this evaluation approach to iteratively develop better controllers and also minimize risk measures as well.

\renewcommand{\baselinestretch}{0.97}

\bibliographystyle{IEEEtran}
\bibliography{IEEEabrv,bib_works}

\end{document}